\documentclass[a4paper,thm-restate]{lipics-v2021}
\bibliographystyle{plain}

\usepackage{hyperref}
\usepackage{amsmath,amsfonts,amssymb,xspace,mathtools, graphicx, amssymb}
\usepackage{graphicx}
\usepackage{hyperref}
\usepackage{xcolor}
\usepackage{thm-restate}
\usepackage{tkz-euclide}
\usepackage{tikz}
\usepackage{float}
\usepackage{multirow, tabularx}
\usepackage{tabularray}

\def\DTW{d_{\text{DTW}}}
\def\eps{\varepsilon}
\def\RR{{\mathbb R}}
\def\NN{{\mathbb N}}
\def\kinst{{X'}}
\def\dbainst{{X}}
\def\kassign{{\pi'}}
\def\dbaassign{{\pi}}
\def\positions{{\mathcal{P}}}
\def\centers{{\mathcal{T}}}
\def\curveone{{\gamma_1}}
\def\curvetwo{{\gamma_2}}

\def\Warping{{\mathcal W}}
\def\assignment{{\mathcal A}}

\DeclareMathOperator*{\argmin}{arg\,min}

\DeclarePairedDelimiter\abs{\lvert}{\rvert}
\DeclarePairedDelimiter\norm{\lVert}{\rVert}

\renewcommand{\emph}[1]{\textit{\textbf{#1}}}

\title{On the number of iterations of the DBA algorithm}

\nolinenumbers

\author{Frederik {Br\"uning}}{Department of Computer Science, University of Bonn, Germany}{}{}{}

\author{Anne Driemel}{Hausdorff Center for Mathematics, University of Bonn, Germany}{}{}{}

\author{Alperen Erg\"ur}{The University of Texas at San Antonio, partially supported by NSF CCF 2110075}{}{}{}

\author{Heiko R\"oglin}{Department of Computer Science, University of Bonn, Germany}{}{}{}

\authorrunning{F. Br\"uning, A. Driemel, A. Erg\"ur and H. R\"oglin}

\keywords{Dynamic Time Warping, Smoothed Analysis, Time Series, Clustering}

\ccsdesc[500]{ Theory of computation ~ Design and analysis of algorithms}

\Copyright{Frederik {Br\"uning}, Anne Driemel, Alperen Erg\"ur and Heiko R\"oglin}

\begin{document}
\hideLIPIcs
\maketitle

\begin{abstract}
    The DTW Barycenter Averaging (DBA) algorithm is a widely used algorithm for estimating the mean of a given set of point sequences. In this context, the mean is defined as a point sequence that minimises the sum of dynamic time warping distances (DTW).
The algorithm is similar to the $k$-means algorithm in the sense that it alternately repeats two steps: (1)~computing an optimal assignment to the points of the current mean, and (2)~computing an optimal mean under the current assignment.
The popularity of DBA can be attributed to the fact that it works well in practice, despite any theoretical guarantees to be known.
In our paper, we aim to initiate a theoretical study of the number of iterations that DBA performs until convergence. We assume the algorithm is given $n$ sequences of $m$ points in $\RR^d$ and a parameter $k$ that specifies the length of the mean sequence to be computed. We show that, in contrast to its fast running time in practice, the number of iterations can be exponential in $k$ in the worst case --- even if the number of input sequences is $n=2$. We complement these findings with experiments on real-world data that suggest this worst-case behaviour is likely degenerate.
To better understand the performance of the algorithm on non-degenerate input, we study DBA in the model of smoothed analysis, upper-bounding the expected number of iterations in the worst case under random perturbations of the input. Our smoothed upper bound is polynomial in $k$, $n$ and $d$, and for constant $n$, it is also polynomial in $m$.
For our analysis, we adapt the set of techniques that were developed for analysing $k$-means and observe that this set of techniques is not sufficient to obtain tight bounds for general $n$.
\end{abstract}

\textbf{Keywords} Dynamic Time Warping, Smoothed Analysis, Time Series, Clustering

\section{Introduction}
The DTW Barycenter Averaging (DBA) algorithm~\cite{petitjean2011} was introduced by Petitjean, Ketterlin and Gan{\c c}arski in 2011 and has since been used in many applications for clustering time series data. The objective is to find a representative time series that optimally summarizes a given set of input time series. Here, the optimality is measured using the sum of DTW distances to the input sequences, in the sense that, the smaller these distances are, the more fitting the representation is. 
The dynamic time warping distance (DTW) was introduced by Sakoe and Chiba \cite{Sakoe78} in the 1970s to capture similarities of time series in speech recognition and belongs to the family of elastic distance measures. It was later rediscovered by Berndt and Clifford~\cite{berndt1994using} and popularized by Keogh and Ratanamahatana~\cite{keogh2005exact}. By now it has long been a staple in the area of time series classification and is often used as a baseline comparison~\cite{lines2015time}.
DTW has been used in a variety of applications from signature verification and touch screen authentication to gait analysis and classification of surgical processes~\cite{ DeLuca2012, forestier2012classification, Gavrila95, Shanker2007}. 
The DBA algorithm has found application in the context of optimization of energy systems~\cite{teichgraeber2019clustering}, forecasting of household electricity demand~\cite{teeraratkul2017shape}, and human activity recognition~\cite{seto2015multivariate} to name a few. The representation computed by DBA is used to speed up classification tasks on time series~\cite{petitjean2016faster} and to improve the training of neural networks~\cite{fawaz2018data,forestier2017generating}.
The original paper by Petitjean et al.~\cite{petitjean2011} that introduced DBA has more than 800 citations giving witness to the fact that DBA is widely successful and popular across different application areas.

DBA consists of two update steps that are repeated in a loop after initialization of the representative sequence. The first step computes an optimal warping path (refer to Section~\ref{sec:prelims} for a precise description)  for each  input sequence with respect to current representative sequence. These warping paths induce an order-preserving assignment of the points of the input sequences to the points of the current representative sequence. In the second update step, each point of the representative sequence is updated with the mean of the group of points assigned to it. This process is repeated until the algorithm converges to a stable solution.
Despite its popularity, DBA is a heuristic in the double meaning that it neither comes with any guarantees on the quality of the solution nor on the running time. It may converge in a local minimum that is far from the global optimum that minimizes the sum of distances. 
Historically, this popularity may be due to the fact that the underlying optimization problem it tries to solve, which is sometimes referred to as DTW-MEAN, is NP-hard \cite{buchin2020hardness,bulteau2020tight} and there are no efficient approximization algorithms known. 
We briefly summarize what is known on the computational complexity. Bulteau et al. show that the problem is also not fixed-parameter tractable in the number of sequences~\cite{bulteau2020tight}.
Using dynamic programming one can find a time series that minimizes the sum of the DTW distances to $n$ time series in $O(m^{2n+1}2^n n)$ time, as shown by Brill et al.~\cite{brill2019}, where $m$ denotes an upper bound on the length of each time series. This approach readily extends to point sequences in higher dimensions leading to a running time of  $O(d m^{2n+1}2^n n)$.
Using an implementation of their exact algorithm, Brill et al.~empirically study properties like uniqueness and length of the representative sequence and discuss the  effect of fixing the length in advance, as it is done in DBA.
They conclude that ``choosing a mean length that is larger than the optimal value induces only a small structural error in most cases.''
Buchin et al.~\cite{buchin2022} consider the problem variant where the set of feasible solutions is restricted to sequences of a given length $k$ or shorter. They call this the length-restricted mean under DTW and present an exact algorithm with running time  $O(n^{2dk}m^{8dk^2})$, where the input are $n$ sequences of $m$ points in $\RR^d$.

\subparagraph{Objective}
In this paper, we initiate a study of the number of iterations that DBA performs until convergence. While convergence properties of the algorithm have been studied before~\cite{schultz2018nonsmooth}, it seems that the running time of the algorithm has not been the subject of rigorous study up to now.
We follow a line of thought that has proved successful for a closely related algorithm---the well-known $k$-means algorithm by Stuart Lloyd \cite{lloyd1982least}. For this algorithm it is known that the number of iterations in the worst case is exponential \cite{arthur2006slow,harpeled2005, inaba2000variance, vattani2011}. However, on most practical instances the $k$-means algorithm is reported to converge very fast. Moreover, using smoothed analysis, it has been shown  that the expected running time under random perturbations of the input is merely polynomial~\cite{roeglin2011}. This raises the question, to which extent these techniques may be applied in the analysis of DBA.

\subparagraph{Overview}
In Section~\ref{sec:upperbounds} we give two different upper bounds for the number of iterations of DBA in the worst-case. The first one is an exponential upper bound that is based on  techniques from real algebraic geometry and uses specific properties of the space of dynamic time warping paths. 
The second bound is based on a potential function argument and it depends on certain geometric properties of the input data. More precisely, we obtain a  worst-case upper bound that is linear in the length of the point sequences, and linear in $\frac{1}{\varepsilon}$. Here $\varepsilon$ is the minimal distance between any two mean point sequences that may be visited in the iterations of DBA.

In Section~\ref{sec:smoothedanalysis} we present our upper bounds for the expected number of iterations in a semi-random data model. More precisely, we perform smoothed analysis of the number of iterations of DBA under Gaussian perturbation (with any variance $\sigma^2$) of deterministic data. The techniques we use for the smoothed analysis are quite versatile, i.e. anti-concentration estimates and standard tail bounds for the norm of a random vector. So the obtained results can be easily generalised to broader distributions, e.g. sub-gaussian random variables. However, we prefer to present the ideas in a less technical manner on Gaussian perturbations.
We show that the expected number of iterations
until DBA converges is at most 
$ \widetilde{O} \left(n^2 m^{8\frac{n}{d}+6}d^4k^6\sigma^{-2} \right)$, where  the $\widetilde{O}(\cdot)$-notation omits logarithmic factors.

These upper bounds are complemented by an exponential worst-case lower bound in Section~\ref{sec:lowerbound}. In particular, we show that there is an instance of two point sequences with length $m=\Theta(k)$ in the plane such that DBA needs $2^{\Omega(k)}$ iterations to converge. 
The techniques in this section borrow from earlier work of Vattani \cite{vattani2011}. Interestingly, our lower bound shows this behaviour already for only $n=2$ sequences. In this setting, there also exists a dynamic programming algorithm that solves the same problem in $O(k^5)$ time~\cite{brill2019}, as mentioned above.
Furthermore, we observe in Section~\ref{sec:experiments} that, when applied to real-world data, the number of iterations that DBA performs on average is much lower than our theoretical analysis suggests. In particular, we observe only sublinear dependencies on any of the parameters $n, m$ and $k$. 
These empirical results support the implicit assumptions in our theoretical study of the algorithm, as the smoothed analysis under randomly perturbed instances avoids artificially constructed boundary cases and corresponding artificially high lower bounds.

\subsection{Preliminaries}\label{sec:prelims}
For $n\in \NN$, we define $[n]$ as the set $\{1,\dots,n\}$. We call an ordered sequence of points $p_1,\dots,p_m$ in $\RR^d$ a \emph{point sequence} of length $m$.
For two points $p,q$ in  $\RR^d$, we denote with  $\norm{p-q}^2$ the \emph{squared Euclidean distance}, where $\norm{.}$ is the standard Euclidean norm.  For $m_1,m_2\in\NN$, each sequence
$(1, 1) = (i_1, j_1), (i_2, j_2), \dots , (i_M, j_M) = (m_1, m_2)$
such that $i_k -i_{k-1}$ and $j_k -j_{k-1}$ are either $0$ or $1$ for all $k$ is a \emph{warping path} from $(1, 1)$ to $(m_1, m_2)$. We denote with $\Warping_{m_1,m_2}$  the set of
all warping paths from $(1, 1)$ to $(m_1, m_2)$. For any two point sequences $\gamma_1=(\gamma_{1,1},\ldots,\gamma_{1,m_1})\in(\RR^d)^{m_1}$ and $ \gamma_2=(\gamma_{2,1},\ldots,\gamma_{2,m_2})\in(\RR^d)^{m_2}$, we also write $\Warping_{\gamma_1,\gamma_2} = \Warping_{m_1,m_2}$, and call elements of $\Warping_{\gamma_1,\gamma_2}$
warping paths between $\gamma_1$ and $\gamma_2$. The \emph{dynamic time warping distance} between
the sequences $\gamma_1$ and $\gamma_2$ is defined as
\[\DTW(\gamma_1, \gamma_2) = \min_{w\in\Warping_{\gamma_1,\gamma_2}}\sum_{(i,j)\in w}\norm{\gamma_{1,i}-\gamma_{2,j}}^2\]
A warping path  that attains the above minimum is called an \emph{optimal warping
path} between $\gamma_1$ and $\gamma_2$. We denote with $\Warping_{m_1,m_2}^*\subset \Warping_{m_1,m_2}$ the set of warping paths $w$ such that there exist point sequences $\gamma_1\in(\RR^d)^{m_1}$ and $\gamma_2\in(\RR^d)^{m_2}$ with optimum warping path $w$. 
Let $X=\{\gamma_1,\dots, \gamma_n\}\subset (\RR^d)^{m}$ be a set of $n$ point sequences  of length $m$ and $C\in (\RR^d)^{k}$ be a point sequence of length $k$.
We call a sequence $\pi$  of tuples $(p,c)$ where $p$ is an element of some point sequence in $X$ and $c$ is an element of the point sequence $C$ an \emph{assignment map} between $X$ and $C$. 
We call an assignment map between $X$ and $C$ \emph{valid} if for each $1\leq i\leq n$  the sequence of all tuples $(p,c)$ of $\pi$ for which $p$ is a point of $\gamma_i$ forms a warping path $w(\pi)_i$ between $\gamma_i$ and $C$. We call a valid assignment map \emph{optimal} if for each $1\leq i\leq n$ the formed warping path is an optimal warping path. We define the \emph{cost} of an \emph{assignment map} $\pi$ as
$\Phi(\pi)=\sum_{(p,c)\in \pi}\norm{p-c}^2$. Similarly, we define the \emph{total warping distance} of a valid assignment map $\pi$ with respect to a point sequence $x=(x_1,\dots,x_k)$ as $\Psi_{\pi}(x) = \sum_{i=1}^{n}  \sum_{(j_1,j_2)\in w(\pi)_i}\norm{\gamma_{i,j_1}-x_{j_2}}^2$.

\subsection{The DBA Algorithm}\label{sec:dbadef}

Let $X$ be a set of $n$ point sequences $\gamma_1,\dots,\gamma_n\in(\RR^d)^m$.
Let $C\in(\RR^d)^k$ be another point sequence and $w^{(1)},\dots,w^{(n)}\in \Warping_{m,k}$ be chosen such that $w^{(i)}$ is an optimal warping path between $\gamma_i$ and  $C$. Then $w^{(1)},\dots,w^{(n)}$ define an assignment map $\pi$ between $X$ and  $C$. The assignment map $\pi$ can be represented by sets $S_1(\pi),\ldots,S_k(\pi)$, where 
\[S_i(\pi)= \cup_{j=1}^n\{\gamma_{j,t}\mid (i,t)\in w^{(j)}\}.\] 
 By construction, $\pi$ minimizes the DTW distances between $\gamma_1,\dots,\gamma_n$ and $C$.
In the opposite direction, the following sequence $C_\pi$ minimizes the sum of squared distances for fixed $\pi$.

\[ C_\pi =\left( c_1(\pi),c_2(\pi), c_3(\pi), \ldots, c_{k}(\pi)  \right)\]  where $c_i(\pi) := \frac{1}{\abs{S_i(\pi)}} \sum_{p \in S_i(\pi)} p. $
DBA alternately computes such assignment maps and  average point sequences as follows.

\begin{enumerate}
    \item Let $\pi_0$ be an initial assignment map (e.g. randomly drawn $w_0^{(1)},\dots,w_0^{(n)}$). Let $j \gets 0$.
    \item Let $j \gets j+1$. Compute the average point sequence $C_{\pi_{j-1}}$ based on $\pi_{j-1}$.
    \item Compute optimal warping paths $w^{(i)}_{j}$ between $\gamma_i$ and $C_{\pi_{j-1}}$ for all $1\leq i\leq n$. The warping paths define  an optimal assignment map $\pi_j$ between $X$ and $C_{\pi_{j-1}}$.
    \item If $\Phi(\pi_{j}) \neq \Phi(\pi_{j-1}) $, then go to Step 2. Otherwise, terminate.
\end{enumerate}

\section{Upper bounds}\label{sec:upperbounds}

We present two different upper bounds on the number of steps performed by the DBA algorithm. The first approach is based on a theorem from real algebraic geometry and the resulting bound holds for any input data.   The second approach is based on a potential function argument and it uses a geometric assumption on the input data. 

\subsection{An unconditional upper bound} \label{algebraic}

In this section we will employ some classical tools from real algebraic geometry to derive an upper bound on the number of steps performed by DBA. 
We recall a special case of a general real algebraic geometry fact suited for our purposes, see \cite{basu} for a more recent updated version of this result. First, we need some definitions.

\begin{definition}
A semi-algebraic set in $\mathbb{R}^n$ is a finite union of the sets of the form
\[  \{ x \in \mathbb{R}^n : f_1(x), \dots, f_\ell(x)=0 , g_1(x),\dots, g_k(x) > 0\} \]
where $f_i$ and $g_j$ are polynomials with real coefficients. A semi-algebraic set $X \subset \mathbb{R}^n$ is semi-algebraically connected if for any semi-algebraic sets $X_1$ and $X_2$ that are closed in $X$, that are disjoint, and $X=X_1 \cup X_2$, then we have $X=X_1$ or $X=X_2$.
\end{definition}
The prototypical example of a  semi-algebraically connected set is the open cube $(0,1)^n$, and every semi-algebraic set can be decomposed into semi-algebraically connected components. The following is a special case of the classical Oleinik-Petrovsky-Thom-Milnor result.

\begin{theorem}\label{thm:oleinik}
Let $Q_1,\ldots,Q_N$ be quadratic polynomials with $s$ variables. The number of semi-algebraically connected components of realizable sign conditions of $Q_1,\ldots,Q_N$ on $\mathbb{R}^s$ is $O\left( (2N)^{s} \right)$. 
\end{theorem}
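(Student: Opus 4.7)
The plan is to derive this as a specialization of the classical Oleinik-Petrovsky-Thom-Milnor bound on the total Betti numbers of a real algebraic variety. I would follow the standard three-step strategy (detailed, for instance, in Basu-Pollack-Roy).

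First, I would perturb away the degenerate sign conditions. Replace the family $Q_1,\ldots,Q_N$ by the doubled family $\{Q_i - \epsilon,\, Q_i + \epsilon\}_{i=1}^N$ for a generic sufficiently small $\epsilon>0$. A standard continuity argument shows that every semi-algebraically connected component of any realizable sign condition of the original family lifts to at least one semi-algebraically connected component of some \emph{strict} sign pattern of the perturbed family. Each strict sign pattern is now a union of full-dimensional open regions, so it suffices to count the number of connected components of $\mathbb{R}^s \setminus \bigcup_{i=1}^{2N} V_i$, where $V_i$ is the smooth quadric hypersurface cut out by one of the perturbed polynomials.

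Second, I would reduce this to a count of Betti numbers of a real algebraic set. The union $Z = \bigcup_{i=1}^{2N} V_i$ is the zero set of a single polynomial $P(x) = \prod_{i=1}^{2N} \tilde Q_i(x)$ of degree at most $4N$, and the number of connected components of its complement in $\mathbb{R}^s$ is bounded (via a Morse-theoretic compactification in a large ball, or equivalently Alexander duality) by the sum of Betti numbers of a real algebraic variety defined by polynomials of degree $O(N)$.

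Third, I would apply Milnor's theorem: the total Betti number of a real algebraic variety in $\mathbb{R}^s$ defined by polynomials of degree at most $D$ is at most $D(2D-1)^{s-1}$. Substituting the degree bound from the previous step and simplifying absorbs all constants into the $O(\cdot)$ and yields the desired $O((2N)^s)$ estimate. The main obstacle I anticipate is the first-to-second step transition: tracking semi-algebraically connected components of lower-dimensional sign strata through the perturbation, and controlling the constants so that the final exponent base is $2N$ rather than a larger polynomial in $N$. The cleanest modern treatment bypasses part of this bookkeeping by an alternating-sum formula for the Betti numbers of the various sign strata together with an inductive degree argument, but in either route the essential quantitative ingredient is Milnor's bound.
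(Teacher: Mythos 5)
The paper offers no proof of this statement at all: it is quoted as a special case of the classical Oleinik--Petrovsky--Thom--Milnor bound, with a pointer to Basu et al.\ for a modern treatment. Your sketch reconstructs the standard textbook argument (infinitesimal perturbation to strict sign conditions, reduction to counting connected components of the complement of a union of hypersurfaces, Milnor's Betti-number bound), and as a roadmap it is essentially the right one; your identification of the perturbation/lifting step as the place where the real technical work sits is also accurate, and that step is handled in the literature by choosing one sample point per component and invoking semi-algebraic triviality in the parameter $\varepsilon$.

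There is, however, one quantitative slippage you should be aware of. In your second step you replace the union of the $2N$ perturbed quadrics by the zero set of the single product polynomial of degree $4N$ and then apply Milnor's bound $D(2D-1)^{s-1}$ with $D = O(N)$. That yields roughly $(8N)^{s}$, and since $(8N)^{s} = 4^{s}(2N)^{s}$, the extra factor $4^{s}$ is not an absolute constant and cannot be ``absorbed into the $O(\cdot)$'' as you claim; the product trick genuinely worsens the base of the exponential. To recover a base proportional to $N$ with the small constant in the statement, one must avoid the product and instead use the refined cell-counting argument of Warren and of Basu--Pollack--Roy: bound the number of components by a sum over subsets of at most $s$ of the $2N$ quadrics, applying the Betti-number bound to each such intersection of degree-$2$ hypersurfaces, which gives something of the form $\sum_{j\le s}\binom{2N}{j}\cdot O(1)^{s}$. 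For the purposes of this paper the distinction is immaterial --- Theorem~\ref{thm:worstcasebound} would merely read $O((8n)^{dk}\cdots)$ instead of $O((2n)^{dk}\cdots)$ --- but your sketch as written proves a slightly weaker bound than the one stated.
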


\begin{figure}[ht]	\centering
{\includegraphics[scale=0.3]{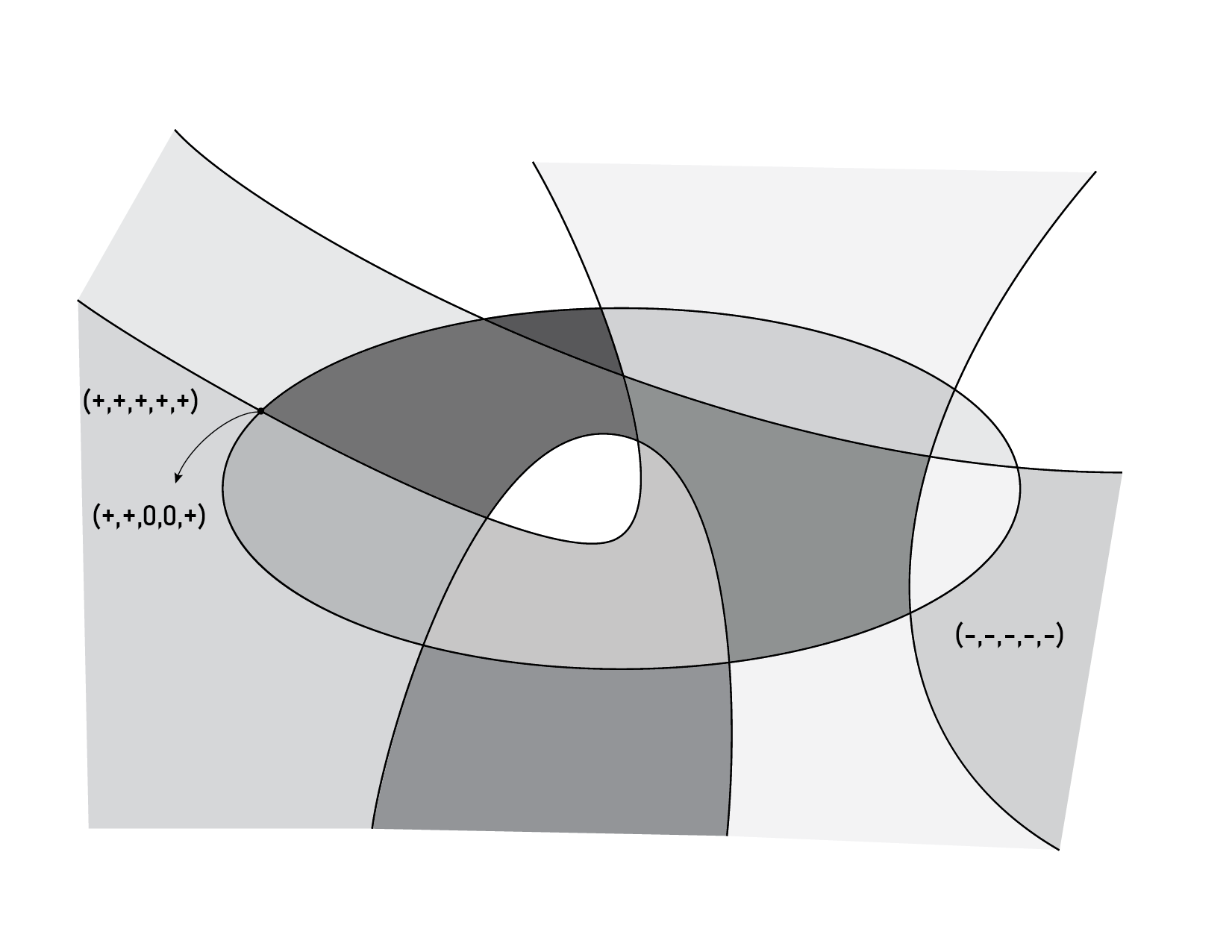}}
\caption{An arrangement of curves in the plane with the respective sign vectors.}
\end{figure}

Now suppose the input point sequences to DBA are $\gamma_1, \gamma_2, \ldots, \gamma_n$, and consider the set $\Warping^{*}_{m,k}$  of all possible optimal warping paths between a point sequence of length $m$ and a point sequence of length $k$. 
We define  quadratic polynomials $F_{\gamma_i,w}$ that encode the cost of a path $w\in \Warping^{*}_{m,k}$ on the point sequence $\gamma_i$ as follows:  
\[ F_{\gamma_i,w}(x_1,x_2,\ldots,x_k) := \sum_{(j_1,j_2)\in w} \norm{\gamma_{i,j_1} - x_{j_2} }^2 \]

To be able to compare different warping paths between input point sequences and an average point sequence $x$, we  consider the following family of quadratic polynomials.

\[ \mathcal{F} :=  \{ F_{\gamma_i,w}(x) - F_{\gamma_i,v}(x) : 1 \leq i \leq n \; , \; w, v \in \Warping^{*}_{k,m} \} \]

We make a simple observation:  If $w$ is an optimal warping path between $\gamma_i$ and $C_{\pi}$ then for all $h \in \Warping^{*}_{m,k}$ we have $F_{\gamma_i,w}(C_{\pi}) - F_{\gamma_i,h}(C_{\pi}) \leq 0$.
Now suppose the warping path $w$ between $\gamma_i$ and $C_{\alpha} \in \mathbb{R}^{dk}$  is part of the optimal assignment map between input point sequences and $C_{\alpha}$. Also suppose that DBA updates the warping path $w$ to another path $v$ after the average point sequence is updated to $C_{\beta}$. This means the following inequalities are true 
\[  F_{\gamma_i,v}(C_{\beta}) - F_{\gamma_i,w}(C_{\beta})  \leq  0 \leq F_{\gamma_i,v}(C_{\alpha}) - F_{\gamma_i,w}(C_{\alpha})  \]
Since the cost decreases at every step, both of the inequalities cannot be an equality at the same time. 
In particular, the sign of quadratic polynomial $ F_{\gamma_i,v}(x) - F_{\gamma_i,w}(x) $ is different for the values $x=C_{\alpha}$ and $x=C_{\beta}$. This implies that the  sign configurations of quadratic polynomials in $\mathcal{F}$ partition the space $\mathbb{R}^{dk}$ in such a way that two consecutive average point sequences computed by DBA are separated. 

We now argue that the algorithm visits any connected component at most once.
\begin{lemma} \label{lem:cell_visit}
 DBA does not visit a connected component of the realisable sign conditions of $ \mathcal{F}$ more than once (except for the very last step).
\end{lemma}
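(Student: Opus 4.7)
The plan is to combine the strict monotone decrease of $\Phi$ that DBA achieves at every non-terminal iteration with a local determinism statement inside a single connected component of the sign arrangement of $\mathcal{F}$. Specifically, I aim to show that if DBA visited the same cell $D$ at two iterations $j_1 < j_2$ that were both strictly before the final iteration $J$, then the sequence of centers and assignments would enter a cycle, contradicting strict decrease.

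The first step is to analyse DBA's behaviour inside a single component $D$. For any pair $w, v \in \Warping^*_{m,k}$ and any input sequence $\gamma_i$, the polynomial $F_{\gamma_i, w} - F_{\gamma_i, v}$ has a constant sign on $D$. From this I would deduce two facts: (i)~the set $W_i(D)$ of warping paths achieving $\min_w F_{\gamma_i, w}(x)$ does not depend on the choice of $x \in D$; and (ii)~any two $w, v \in W_i(D)$ satisfy $F_{\gamma_i, w}(x) = F_{\gamma_i, v}(x)$ throughout $D$, since the sign of their difference is forced to be simultaneously non-positive and non-negative. Fixing a deterministic tiebreaking rule (for instance, a total order on $\Warping^*_{m,k}$ used to select an element of $W_i(D)$), the warping paths chosen in Step~3 of DBA depend only on the cell of the current center and not on its specific location.

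With this in hand, if $C_{\pi_{j_1 - 1}}$ and $C_{\pi_{j_2 - 1}}$ both lie in $D$, then the groups $S_i(\pi_{j_1})$ and $S_i(\pi_{j_2})$ coincide for every $i$, so $C_{\pi_{j_1}} = C_{\pi_{j_2}}$. Iterating this, one obtains $C_{\pi_{j_1 + t}} = C_{\pi_{j_2 + t}}$ for every $t \ge 0$ for which both centers are defined, and in particular $\Phi(\pi_{j_1 + 1}) = \Phi(\pi_{j_2 + 1})$ whenever iteration $j_2 + 1$ exists. But the strict decrease of $\Phi$ between consecutive non-terminal iterations yields $\Phi(\pi_{j_1 + 1}) > \Phi(\pi_{j_1 + 2}) > \cdots > \Phi(\pi_{j_2 + 1})$ whenever $j_2 + 1 \le J$, a contradiction. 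Hence $j_2 = J$, which is exactly the exception stated in the lemma. The step I expect to require the most care is the cell-local determinism claim, since without a deterministic tiebreaking rule one would have to argue that all choices from $W_i(D)$ lead to the same average; postulating the tiebreaking rule sidesteps this cleanly and is harmless for the counting argument that follows.
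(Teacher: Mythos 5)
Your argument is correct for a fixed-tiebreaking version of DBA, but it takes a genuinely different route from the paper's and rests on an extra hypothesis that the paper's proof does not need. Both arguments hinge on the same structural fact, which you establish carefully: within a connected component $D$ of the sign arrangement of $\mathcal{F}$, every polynomial $F_{\gamma_i,w}-F_{\gamma_i,v}$ has constant sign, so the set of optimal warping paths is identical at every point of $D$. From there the paper argues directly with the potential: if $C_{\pi_i}$ and $C_{\pi_j}$ lie in the same cell with $i+1<j$, then $\pi_{i+1}$, being optimal for $C_{\pi_i}$, is also optimal for $C_{\pi_j}$, whence $\Psi_{\pi_j}(C_{\pi_j})\geq\Psi_{\pi_{i+1}}(C_{\pi_j})\geq\Psi_{\pi_{i+1}}(C_{\pi_{i+1}})$, contradicting the strict decrease of $\Psi_{\pi_t}(C_{\pi_t})$; this says nothing about which optimal warping path the algorithm actually selects. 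You instead argue that the trajectory becomes periodic: two centers in the same cell produce the same assignment, hence the same successor center, hence equal values of $\Phi$ one step later, contradicting strict decrease. That is exactly where the deterministic tiebreaking rule you postulate does real work: your own fact~(ii) only yields that all optimal paths have equal cost throughout $D$, not that they induce the same groups $S_i(\pi)$ or the same averages, so without a fixed selection rule two visits to the same cell may legitimately produce different successor centers and the cycle never forms. Since the algorithm as defined in Section~\ref{sec:dbadef} leaves the choice among optimal warping paths unspecified, your proof establishes Lemma~\ref{lem:cell_visit} (and hence the count behind Theorem~\ref{thm:worstcasebound}) only for the tiebreaking variant, whereas the paper's optimality-transfer argument covers every execution. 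The repair is small --- replace the cycling step by the two-sided optimality inequalities above --- but as written the ``harmless'' tiebreaking assumption is not merely cosmetic.
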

\begin{proof}
Assume a connected component contains $C_{\pi_i}$ and $C_{\pi_j}$ where $i<j$ and the algorithm does not converge in $C_{\pi_{j}}$. 
Because two consecutive average point sequences computed by the algorithm are separated by the realisable sign conditions of $\mathcal{F}$, we have $i+1<j$.  Since the cost decreases monotonically at every step of the algorithm, we have
\[\Psi_{\pi_i}(C_{\pi_i})>\Psi_{\pi_{i+1}}(C_{\pi_{i+1}})> \Psi_{\pi_j}(C_{\pi_j}) \]
Observe that $\pi_{i+1}$ was computed based on $C_{\pi_i}$ and is thus an  optimal assignment map for $C_{\pi_{i}}$.
Within every connected component, the set of optimal warping paths remains unchanged. Therefore, $\pi_{i+1}$ is also optimal  for $C_{\pi_j}$. Since $C_{\pi_{i+1}}$ is optimal for $\pi_{i+1}$, we have
\[\Psi_{\pi_j}(C_{\pi_j})\geq\Psi_{\pi_{i+1}}(C_{\pi_j})\geq\Psi_{\pi_{i+1}}(C_{\pi_{i+1}}).\]
This contradicts the statement that the cost decreases at every step.
\end{proof}
 
Lemma~\ref{lem:cell_visit} implies that the number of steps of DBA algorithm (except the very last step) is upper bounded by the number of visited regions in the sign realization of the family $\mathcal{F}$. Using the classical Oleinik-Petrovsky-Thom-Milnor result (Theorem~\ref{thm:oleinik}), we can give an upper bound on the number of regions in $\mathbb{R}^{dk}$ determined by signs of the quadratic forms $\mathcal{F}$.

\begin{lemma} \label{arrangement}
 The number of connected components of the realizable sign conditions of $ \mathcal{F}$ is $O((2n \abs{\Warping^{*}_{m,k}}^2)^{dk})$.
\end{lemma}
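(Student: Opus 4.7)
The plan is to apply Theorem \ref{thm:oleinik} directly to the family $\mathcal{F}$, treating the coordinates of $x=(x_1,\ldots,x_k)\in(\RR^d)^k$ as $dk$ real variables. For this I need to verify two things: that every element of $\mathcal{F}$ is a polynomial of degree at most $2$ in these variables, and that the counts of polynomials $N$ and variables $s$, when plugged into $(2N)^s$, reproduce the claimed bound.

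For the degree, I would expand
\[F_{\gamma_i,w}(x)=\sum_{(j_1,j_2)\in w}\norm{\gamma_{i,j_1}-x_{j_2}}^2,\]
each summand of which is a quadratic polynomial in the coordinates of $x$ since $\gamma_{i,j_1}$ is fixed. Hence $F_{\gamma_i,w}$ and the difference $F_{\gamma_i,w}-F_{\gamma_i,v}$ are both of degree at most $2$, which is what Theorem \ref{thm:oleinik} demands. Note that the pure quadratic terms $\norm{x_{j_2}}^2$ do not in general cancel in the difference, because the multiplicity with which each index $j_2$ appears as the second coordinate of a pair in $w$ can differ from its multiplicity in $v$; but this merely confirms that we are in the generic quadratic case rather than a degenerate linear one and does not affect the applicability of the bound.

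For the counting, the family $\mathcal{F}$ is indexed by a choice of $i\in[n]$ together with an ordered pair $(w,v)\in \Warping^{*}_{m,k}\times \Warping^{*}_{m,k}$, so $N=n\abs{\Warping^{*}_{m,k}}^2$, while the ambient dimension is $s=dk$. Substituting these into Theorem \ref{thm:oleinik} gives $O\bigl((2N)^{s}\bigr)=O\bigl((2n\abs{\Warping^{*}_{m,k}}^2)^{dk}\bigr)$, matching the statement. There is no real obstacle here beyond checking the degree and correctly tallying the indexing sets; once Theorem \ref{thm:oleinik} is in hand, the lemma is essentially bookkeeping.
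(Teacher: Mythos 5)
Your proposal is correct and follows exactly the route the paper intends: the lemma is a direct application of Theorem~\ref{thm:oleinik} with $N = n\abs{\Warping^{*}_{m,k}}^2$ quadratic polynomials in $s = dk$ variables. Your extra check that the differences $F_{\gamma_i,w}-F_{\gamma_i,v}$ remain genuinely quadratic (because the multiplicities of each $x_{j_2}$ can differ between $w$ and $v$) is a sensible sanity check but does not change the argument.
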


\begin{observation}\label{obs:countingwarpingpaths}
\[\abs{\Warping^{*}_{m,k}} \leq \binom{m+k-2}{m-1} \leq \min \{ m^{k-1} , k^{m-1} \}. \]
\end{observation}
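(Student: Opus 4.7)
The plan is to prove the two inequalities as separate elementary combinatorial facts, neither of which requires anything beyond standard lattice-path counting.

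For the first inequality $\abs{\Warping^*_{m,k}} \leq \binom{m+k-2}{m-1}$, I would use the trivial inclusion $\Warping^*_{m,k} \subseteq \Warping_{m,k}$ and directly count $\abs{\Warping_{m,k}}$. A warping path from $(1,1)$ to $(m,k)$ consists of $m-1$ unit steps advancing the first coordinate and $k-1$ unit steps advancing the second, for a total of $m+k-2$ steps, and it is uniquely determined by the choice of which $m-1$ of those $m+k-2$ positions carry a first-coordinate step. This yields $\abs{\Warping_{m,k}} = \binom{m+k-2}{m-1}$ directly.

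For the second inequality $\binom{m+k-2}{m-1} \leq \min\{m^{k-1}, k^{m-1}\}$, I would encode each monotone lattice path as the non-decreasing sequence $(y_1, \ldots, y_{m-1}) \in [k]^{m-1}$, where $y_i$ records the second coordinate at which the $i$-th first-coordinate step is performed. This gives an injection of the paths into $[k]^{m-1}$, which has cardinality $k^{m-1}$, so $\binom{m+k-2}{m-1} \leq k^{m-1}$. Swapping the roles of $m$ and $k$ and invoking the symmetry $\binom{m+k-2}{m-1} = \binom{m+k-2}{k-1}$ yields the companion bound $\binom{m+k-2}{m-1} \leq m^{k-1}$, and taking the minimum completes the proof.

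The only mild obstacle is being precise about the warping-path definition: consecutive positions should differ by one in exactly one coordinate, so that the permitted step types are only the two axis-aligned unit moves. Under this convention the count $\binom{m+k-2}{m-1}$ matches $\abs{\Warping_{m,k}}$ exactly and the argument is immediate; if one instead permitted simultaneous increments $(1,1)$, the cardinality would grow to the Delannoy number $D(m-1, k-1)$ and one would have to argue separately that such diagonal paths can either be excluded from $\Warping^*_{m,k}$ or absorbed into the axis-aligned count by a more careful injection.
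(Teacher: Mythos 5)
Your argument for the second inequality is fine: the bijection between diagonal-free monotone lattice paths and non-decreasing sequences in $[k]^{m-1}$ (and its symmetric counterpart) correctly gives $\binom{m+k-2}{m-1}\leq\min\{m^{k-1},k^{m-1}\}$; the paper does not even spell this part out. The problem is the first inequality. You flagged the right fork in the road but took the wrong branch: the paper's definition of a warping path requires only that $i_k-i_{k-1}$ and $j_k-j_{k-1}$ each lie in $\{0,1\}$, so the simultaneous increment $(1,1)$ \emph{is} a permitted step. Consequently $\abs{\Warping_{m,k}}$ is not $\binom{m+k-2}{m-1}$ but (at least) the Delannoy number $D(m-1,k-1)$, which is strictly larger, and the trivial inclusion $\Warping^{*}_{m,k}\subseteq\Warping_{m,k}$ does not yield the claimed bound. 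The entire content of the observation is in handling these diagonal steps, and your proposal explicitly defers that to ``a separate argument'' without supplying one.

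That separate argument is exactly what the paper's proof consists of: it first counts the diagonal-free paths as $\binom{m+k-2}{m-1}$, and then argues that a diagonal step $(i,j)\rightarrow(i+1,j+1)$ amounts to omitting one nonnegative assignment term $\norm{\gamma_{1,i+1}-\gamma_{2,j}}^2$ from the cost, so that it never hurts to take the diagonal; from this it concludes that admitting diagonal steps does not increase the number of \emph{realisable optimal} warping paths beyond the diagonal-free count. Whether one finds that charging argument fully rigorous or not, it is the substantive step that separates $\abs{\Warping^{*}_{m,k}}$ from the much larger $\abs{\Warping_{m,k}}$, and your proof cannot be completed without some version of it. To repair your write-up, you would need to either exhibit an explicit injection from $\Warping^{*}_{m,k}$ into the set of diagonal-free monotone paths, or prove directly that any path containing a diagonal that is optimal for some input pair can be accounted for within the $\binom{m+k-2}{m-1}$ budget.
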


\begin{proof}
Indeed, the number of paths on an $m \times k$ grid that can take steps $(i,j) \rightarrow (i+1,j)$ or $(i,j) \rightarrow (i,j+1)$ is given by $\binom{m+k-2}{m-1}$, since any such path takes $m+k-2$ steps. However, our warping paths may have diagonal steps of the form $(i,j) \rightarrow (i+1,j+1)$. Note that such a diagonal step corresponds to 'omitting' an assignment term between two points from the total sum and therefore always leads to a decrease in the cost. Thus, allowing diagonal steps does not increase the total number of realisable warping paths. 
\end{proof}

Now, the bound in Lemma~\ref{arrangement} together with Observation~\ref{obs:countingwarpingpaths} immediately gives an upper bound on the number of steps of DBA leading to the following theorem.

\begin{theorem}\label{thm:worstcasebound}
The number of iterations until DBA converges is in $ O ( (2n)^{dk} \binom{m+k-2}{m-1}^{2dk} )$. 
\end{theorem}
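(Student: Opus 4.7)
The plan is to combine the three preceding results in a direct way: Lemma~\ref{lem:cell_visit} bounds the number of iterations by the number of connected components, Lemma~\ref{arrangement} bounds the number of components using the algebraic tool Theorem~\ref{thm:oleinik}, and Observation~\ref{obs:countingwarpingpaths} bounds the combinatorial quantity $|\Warping^{*}_{m,k}|$ appearing there.

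More concretely, I would proceed as follows. Each iteration $j$ of DBA produces an average point sequence $C_{\pi_j}$, viewed as a point in $\mathbb{R}^{dk}$. The discussion preceding Lemma~\ref{lem:cell_visit} establishes that whenever the warping path assigned to some input sequence $\gamma_i$ changes from $w$ to $v$ between iterations, the quadratic polynomial $F_{\gamma_i,v}(x) - F_{\gamma_i,w}(x) \in \mathcal{F}$ takes strictly opposite signs at $C_{\pi_{j-1}}$ and $C_{\pi_j}$, so $C_{\pi_{j-1}}$ and $C_{\pi_j}$ lie in distinct connected components of the sign arrangement of $\mathcal{F}$. Lemma~\ref{lem:cell_visit} strengthens this and asserts that every connected component is visited at most once (apart from the final one). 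Consequently the total number of iterations is at most one more than the number of connected components of the realisable sign conditions of $\mathcal{F}$.

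Next I would bound that number of components. The family $\mathcal{F}$ consists of quadratic polynomials in the $dk$ variables parameterising the candidate mean $x \in (\RR^d)^k$, indexed by a choice of $i \in [n]$ and an ordered pair $(w,v) \in \Warping^{*}_{m,k}\times \Warping^{*}_{m,k}$. Hence $|\mathcal{F}| \le n \cdot |\Warping^{*}_{m,k}|^2$. Applying Theorem~\ref{thm:oleinik} with $s = dk$ and $N = n \cdot |\Warping^{*}_{m,k}|^2$ yields the bound stated in Lemma~\ref{arrangement}, namely $O\!\left( (2n |\Warping^{*}_{m,k}|^2)^{dk} \right)$ connected components.

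Finally, I would plug in the combinatorial estimate from Observation~\ref{obs:countingwarpingpaths}, $|\Warping^{*}_{m,k}| \le \binom{m+k-2}{m-1}$, to obtain
\[
O\!\left( (2n)^{dk} \binom{m+k-2}{m-1}^{2dk} \right),
\]
which matches the theorem statement. There is no real obstacle here; everything hard has been absorbed into Theorem~\ref{thm:oleinik}, Lemma~\ref{lem:cell_visit}, Lemma~\ref{arrangement}, and Observation~\ref{obs:countingwarpingpaths}. The only points demanding a little care are correctly identifying the ambient dimension as $s = dk$ (not $k$), and checking that $\mathcal{F}$ really has at most $n |\Warping^{*}_{m,k}|^2$ members so that the exponent base in the Oleinik--Petrovsky--Thom--Milnor bound is what we claim.
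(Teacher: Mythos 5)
Your proposal is correct and follows exactly the paper's route: Theorem~\ref{thm:worstcasebound} is obtained in the paper precisely by chaining Lemma~\ref{lem:cell_visit}, Lemma~\ref{arrangement} (itself an application of Theorem~\ref{thm:oleinik} with $s=dk$ and $N \le n\abs{\Warping^{*}_{m,k}}^2$), and Observation~\ref{obs:countingwarpingpaths}. Your points of care --- the ambient dimension being $dk$ and the count of members of $\mathcal{F}$ --- are exactly the right ones, and the argument goes through as you describe.
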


\subsection{Upper bound based on geometric properties of the input data} 
 We denote the set of valid assignment maps from $n$ input point sequences of length $m$ to a point sequence of length $k$ with $\assignment_{n,m,k}$. 
It is
$\abs{\assignment_{n,m,k}} \leq m^{kn} $.
For an assignment map $\pi \in \assignment_{n,m,k}$ and a point sequence $x=(x_1,x_2,\ldots,x_k)$ with $x_i \in \mathbb{R}^d$, we can rewrite the total warping distance as
\[ \Psi_{\pi}(x) := \sum_{i=1}^k \sum_{y \in S_i(\pi)} \norm{y-x_i}^2 \]
Our first observation is that for all $x$ we have $\Psi_{\pi}(x) \geq \Psi_{\pi}(C_{\pi})$.  We would like to express $\Psi_{\pi}(x)$ in a way that resembles an inertia. 
We set $I_{\pi}:=\sum_{i=1}^k \sum_{y \in S_{i}(\pi)} (   \norm{y}^2 - \norm{c_i(\pi)}^2 )$. Observe that $I_{\pi}= \Psi_{\pi}(C_{\pi})$. We see $I_{\pi}$ as the inertia of $\pi$, and we have
\[ \Psi_{\pi}(x)= I_{\pi} + \sum_{i=1}^k \abs{S_i(\pi)} \; \norm{x_i - c_i(\pi)}^2\]
We use the following geometric properties of the input:\\
\textbf{Normalization Property:} Let $B>0$. For any vector $y$ in any input point sequence we have $\norm{y}^2 \leq B$. \\
\textbf{Separation Property:} Let $\varepsilon>0$. For any two different assignment maps $\alpha, \beta \in \assignment_{n,m,k}$ we have
\[ \norm{C_{\alpha}-C_{\beta}}^2 \coloneqq \sum_{i=1}^k (c_i(\alpha)-c_i(\beta))^2 \geq \varepsilon \]
With the help of the following lemma, we can derive an upper bound in Theorem~\ref{niceupperbound}.
\begin{lemma} \label{normalization}
 For all $\alpha \in \assignment_{n,m,k}$, it is $I_{\alpha} \leq  B n (m+k) $.
\end{lemma}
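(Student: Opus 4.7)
The plan is to recognize $I_{\alpha}$ as a sum of squared deviations from centroids, bound each term using the normalization property, and count the total number of assignment pairs across the $n$ warping paths.

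First, I would rewrite $I_\alpha$ in a more familiar form. Since $c_i(\alpha)$ is the centroid of $S_i(\alpha)$, the standard identity
\[
\sum_{y \in S_i(\alpha)} \norm{y - c_i(\alpha)}^2 \;=\; \sum_{y \in S_i(\alpha)} \norm{y}^2 \;-\; \abs{S_i(\alpha)}\,\norm{c_i(\alpha)}^2
\]
shows that each inner sum $\sum_{y \in S_i(\alpha)}(\norm{y}^2 - \norm{c_i(\alpha)}^2)$ equals the intra-cluster squared deviation $\sum_{y \in S_i(\alpha)}\norm{y - c_i(\alpha)}^2$, which is nonnegative. This is a sanity check; it also shows I can drop the $\norm{c_i(\alpha)}^2$ term for the purposes of an upper bound.

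Next, for the upper bound itself I would proceed crudely: since $\norm{c_i(\alpha)}^2 \geq 0$, we have
\[
I_{\alpha} \;=\; \sum_{i=1}^{k} \sum_{y \in S_i(\alpha)} \bigl(\norm{y}^2 - \norm{c_i(\alpha)}^2\bigr) \;\leq\; \sum_{i=1}^{k} \sum_{y \in S_i(\alpha)} \norm{y}^2 \;\leq\; B \cdot \sum_{i=1}^{k} \abs{S_i(\alpha)},
\]
where the last inequality uses the normalization property $\norm{y}^2 \leq B$ for every point $y$ appearing in any input sequence.

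Finally, I would bound the total multiplicity $\sum_i \abs{S_i(\alpha)}$ by a counting argument on warping paths. Because $\alpha$ is a valid assignment map, the set $S_i(\alpha)$ is exactly the collection of input points sent to center index $i$ across all $n$ warping paths $w(\alpha)_1,\ldots,w(\alpha)_n$ that $\alpha$ induces. Therefore
\[
\sum_{i=1}^{k} \abs{S_i(\alpha)} \;=\; \sum_{j=1}^{n} \abs{w(\alpha)_j}.
\]
Each warping path from $(1,1)$ to $(m,k)$ consists of steps that each increment at least one coordinate by $1$, so its length is at most $m + k - 1$. Summing over the $n$ input sequences gives $\sum_j \abs{w(\alpha)_j} \leq n(m+k-1) \leq n(m+k)$, and combining with the previous display yields $I_{\alpha} \leq B\,n(m+k)$.

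The argument is essentially bookkeeping — there is no real obstacle, but the one place to be careful is the counting step: one must remember that $\sum_i \abs{S_i(\alpha)}$ counts each warping-path entry once, and that a warping path between sequences of lengths $m$ and $k$ can visit at most $m+k-1$ grid cells before reaching $(m,k)$. Everything else is a direct application of the normalization property.
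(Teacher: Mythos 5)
Your proof is correct and follows essentially the same route as the paper: drop the nonnegative $\norm{c_i(\alpha)}^2$ term, apply the normalization property to bound each $\norm{y}^2$ by $B$, and bound $\sum_i \abs{S_i(\alpha)}$ by noting that each of the $n$ warping paths has at most $m+k$ entries. Your version is slightly more detailed (the centroid identity as a sanity check, the explicit $m+k-1$ count), but the substance is identical.
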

\begin{proof}
\[ I_{\alpha} = \sum_{i=1}^k \sum_{y \in S_{i}(\alpha)} (   \norm{y}^2 - \norm{c_i(\alpha)}^2 ) \leq  B \sum_{i=1}^k \abs{S_{i}(\alpha)}  \]
Note that every warping path between the average point sequence of length $k$ and an input point sequence of length $m$ consists of at most $m+k$ many steps. This means every point sequence contributes to the sum $\sum_{i=1}^k \abs{S_{i}(\alpha)}$ with at most $m+k$ elements, and hence we have
$\sum_{i=1}^k \abs{S_{i}(\alpha)} \leq n (m+k)$.
\end{proof}

\begin{theorem} \label{niceupperbound}
If the input data satisfies the normalization property with parameter $B$ and separation property with parameter $\varepsilon$, then the number of steps performed by DBA is at most
\[\frac{B(m+k)}{\varepsilon}\]
\end{theorem}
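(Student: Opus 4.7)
The plan is to use the inertia $I_{\pi_j}:=\Psi_{\pi_j}(C_{\pi_j})$ as a potential function. The starting value is bounded by $I_{\pi_0}\leq Bn(m+k)$ via Lemma~\ref{normalization}, and the goal is to show that each iteration that does not trigger termination reduces $I$ by at least $n\varepsilon$. Dividing the budget by the per-step drop then yields the stated bound.

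For monotonicity and a quantitative per-step drop I would use the decomposition $\Psi_\pi(x) = I_\pi + \sum_{i=1}^k \abs{S_i(\pi)}\,\norm{x_i-c_i(\pi)}^2$ stated just before the theorem. Optimality of $\pi_j$ as an assignment for $C_{\pi_{j-1}}$ gives $\Psi_{\pi_j}(C_{\pi_{j-1}})\leq \Psi_{\pi_{j-1}}(C_{\pi_{j-1}}) = I_{\pi_{j-1}}$, and optimality of $C_{\pi_j}$ as the centroid for $\pi_j$ gives $I_{\pi_j}\leq \Psi_{\pi_j}(C_{\pi_{j-1}})$. These two inequalities together imply that $I$ is non-increasing, and subtracting them produces
\[ I_{\pi_{j-1}}-I_{\pi_j} \;\geq\; \Psi_{\pi_j}(C_{\pi_{j-1}}) - \Psi_{\pi_j}(C_{\pi_j}) \;=\; \sum_{i=1}^k \abs{S_i(\pi_j)}\,\norm{c_i(\pi_{j-1})-c_i(\pi_j)}^2. \]

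The crucial combinatorial observation, which I expect to be the main point to flag, is that $\abs{S_i(\pi_j)}\geq n$ for every $i\in[k]$: each of the $n$ warping paths in $\pi_j$ must visit every column of the grid between its first and its last step, so it contributes at least one input point to each of the $k$ center positions. This upgrades the above to $I_{\pi_{j-1}}-I_{\pi_j}\geq n\norm{C_{\pi_{j-1}}-C_{\pi_j}}^2$. Whenever $C_{\pi_{j-1}}\neq C_{\pi_j}$, the separation property gives $\norm{C_{\pi_{j-1}}-C_{\pi_j}}^2\geq\varepsilon$, so the decrease is at least $n\varepsilon$. The exceptional case $C_{\pi_{j-1}}=C_{\pi_j}$ can occur at most once: if it occurs at iteration $j$, then $\pi_{j+1}$ is optimal for the same center as $\pi_j$, so $\Phi(\pi_{j+1})=\Phi(\pi_j)$ and DBA terminates at iteration $j+1$. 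Telescoping the per-step decrease of at least $n\varepsilon$ over all non-exceptional iterations and applying Lemma~\ref{normalization} yields a total iteration count of at most $\frac{Bn(m+k)}{n\varepsilon}=\frac{B(m+k)}{\varepsilon}$, up to an additive constant absorbed in the count of the terminating step. Cancelling the factor $n$ coming from $\abs{S_i(\pi_j)}\geq n$ against the factor $n$ in the inertia bound is the key ingredient; without it, the bound would gain an extra factor of $n$.
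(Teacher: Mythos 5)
Your proposal is correct and follows essentially the same route as the paper: use $I_\pi$ as a potential, bound its initial value by $Bn(m+k)$ via Lemma~\ref{normalization}, use the decomposition $\Psi_\pi(x)=I_\pi+\sum_i\abs{S_i(\pi)}\norm{x_i-c_i(\pi)}^2$ together with $\abs{S_i(\pi)}\geq n$ (which you additionally justify, correctly, by the fact that every warping path covers every center index) and the separation property to get a per-step drop of at least $n\varepsilon$, and divide. Your explicit handling of the case $C_{\pi_{j-1}}=C_{\pi_j}$ is harmless but unnecessary, since the separation property already forces distinct assignment maps to have centers at squared distance at least $\varepsilon>0$.
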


\begin{proof}
Suppose DBA has started from assignment map $\pi_0$. If DBA takes a step from some assignment map $\alpha$ to some assignment map $\beta$ this means
\begin{align*}
    I_{\alpha} &= \Psi_{\alpha}(C_{\alpha})  >  \Psi_{\beta}(C_{\alpha})\\ &= I_{\beta} + \sum_{i=1}^k \abs{S_i(\beta)} \; \norm{c_i(\alpha) - c_i(\beta)}^2
\end{align*}
Since $\abs{S_i(\beta)} \geq n$ for all $i$, this implies $I_{\alpha} > I_{\beta} + n \norm{C_{\alpha}-C_{\beta}}^2$. Using the seperation property of the data and Lemma \ref{normalization} , we get
\begin{equation}
\label{potential_eq}
    I_{\alpha} > I_{\beta} + n \varepsilon  \geq  I_{\beta} + \frac{\varepsilon}{B(m+k)} I_{\pi_0}
\end{equation}
Let $\pi_T$ be the assignment map after $T$ steps of DBA, then we have by (\ref{potential_eq}) that
\[I_{\pi_T}< I_{\pi_0}-T \frac{\varepsilon}{B(m+k)}I_{\pi_0}\]
\end{proof}

\section{Smoothed Analysis}\label{sec:smoothedanalysis}

Our randomness model is as follows:   An adversary specifies an instance $X'\in ([0,1]^d)^{nm}$ of $n$ point sequences $\gamma_1,\dots,\gamma_n$ of length $m$ in $[0,1]^d$, where each sequence $\gamma_i$ is given by its $m$ points $\gamma_i=(\gamma_{i,1},\dots,\gamma_{i,m})$. Then we add to each point of $X'$ an $d$-dimensional random vector with independent Gaussian coordinates of mean $0$ and standard deviation $\sigma$.  The resulting vectors form the input point sequences. We assume without loss of generality that $\sigma\leq 1$, since the case $\sigma>1$ corresponds to a scaled down instance $X'\in ([0,\frac{1}{\sigma}]^d)^{nm}$ with additive $d$-dimensional Gaussian random vectors with mean $0$ and standard deviation $1$.  We call this randomness model $m$-length sequences with $\mathcal{N}(0,\sigma)$ perturbation.

We note that the results in this section hold for a more general family of random input models (See Section 1.5 of \cite{ergursmoothed} or Section 3.1 of \cite{ergurPV}). We conduct the analysis only for Gaussian perturbation for the sake of simplicity and obtain the following theorem. 

\begin{theorem}\label{thm:mainsmooth}
Suppose $d\geq2$, then the expected number of iterations
until DBA  converges is at most \[ O \left(\frac{n^2 m^{8\frac{n}{d}+6}d^4k^6\ln(nm)^4}{\sigma^2}  \right).\]
\end{theorem}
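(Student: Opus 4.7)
The plan is to convert the deterministic bound from Theorem~\ref{niceupperbound} --- iterations $\leq B(m+k)/\varepsilon$ under the normalization and separation properties --- into a bound in expectation by controlling $B$ and $1/\varepsilon$ probabilistically under the Gaussian perturbation model.

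I would begin with the easy part: bounding the normalization parameter $B$. Each input point is the sum of a deterministic vector in $[0,1]^d$ and an independent Gaussian noise vector $g_{j,t} \sim \mathcal{N}(0,\sigma^2 I_d)$. Standard Gaussian-norm tail estimates together with a union bound over the $nm$ points yield $B = O(d\log(nm))$ with probability at least $1 - 1/\mathrm{poly}(nm)$. This contributes a factor $d\log(nm)$ to the numerator of the iteration bound.

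The substantive step is a high-probability lower bound on the separation
\[\varepsilon := \min_{\alpha \neq \beta \in \assignment_{n,m,k}} \norm{C_\alpha - C_\beta}^2.\]
The key reduction is from pairs of full assignment maps (of which there are up to $m^{2n(k-1)}$) to pairs of \emph{single-position groups}. Since $\norm{C_\alpha - C_\beta}^2 = \sum_{i=1}^k \norm{c_i(\alpha) - c_i(\beta)}^2$ and $\alpha \neq \beta$ forces $S_i(\alpha) \neq S_i(\beta)$ for some $i$, we have $\varepsilon \geq \min_i \min_{A \neq B} \norm{c_i(A) - c_i(B)}^2$, where the inner minimum ranges over distinct possible groups $A,B$ at center position $i$. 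Monotonicity of warping paths implies that such a group is the union over the $n$ input sequences of an interval of indices, so the number of possible groups at a fixed position is at most $m^{2n}$, producing at most $k \cdot m^{4n}$ triples $(i, A, B)$ in the union bound. For a fixed triple, the centroid difference $c_i(A) - c_i(B)$ is a Gaussian vector in $\mathbb{R}^d$ whose covariance is a positive multiple of $\sigma^2 I_d$, since the symmetric difference $A \triangle B$ is nonempty and contributes an isotropic Gaussian component that is not cancelled. Gaussian anti-concentration then yields
\[\Pr\!\left[\norm{c_i(A) - c_i(B)}^2 \leq \delta\right] \leq \mathrm{poly}(m) \cdot (\delta/\sigma^2)^{d/2}.\]
Combining with the union bound and choosing $\delta$ so that the failure probability is $1/\mathrm{poly}(nm)$ yields $1/\varepsilon = O(m^{8n/d})/\sigma^2$ up to polynomial factors in $n,k,d$ and logarithmic factors in $nm$.

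Finally, I would combine the two estimates via Theorem~\ref{niceupperbound}: on the intersection of the two good events, the number of iterations is at most $B(m+k)/\varepsilon = O(n^2 m^{8n/d+6} d^4 k^6 \ln(nm)^4 / \sigma^2)$. On the complement --- which has probability $1/\mathrm{poly}(nm)$ --- I would fall back on the worst-case algebraic bound from Theorem~\ref{thm:worstcasebound}. Choosing the failure probability small enough (which only affects logarithmic factors) ensures that the complementary term contributes negligibly to the expectation. The principal obstacle, and the reason the bound takes this particular shape, is the tightness of the union bound: the naive count over pairs $(\alpha,\beta)$ of full assignment maps gives $m^{2n(k-1)}$, which when pushed through the $d/2$-th power of anti-concentration leaves an $m^{\Theta(nk/d)}$ factor --- exponential in $k$ --- and ruins polynomial dependence on $k$. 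The per-position reduction above replaces $m^{nk}$ by $m^{4n}$ in the effective count and is precisely what produces the $m^{8n/d}$ exponent in the final bound.
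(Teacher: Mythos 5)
Your overall architecture coincides with the paper's: bound the normalization parameter $B$ by Gaussian norm tails with a union bound over the $nm$ points; bound the separation parameter $\varepsilon$ by reducing from pairs of full assignment maps to pairs of groups at a single centre position, counting each group as a union of $n$ index intervals (giving the $m^{4n}$ count and hence the $m^{8n/d}$ exponent), and applying anti-concentration to the one unfixed point of the nonempty symmetric difference; then feed both into Theorem~\ref{niceupperbound} and use Theorem~\ref{thm:worstcasebound} as a cap. These are exactly Lemmas~\ref{normalization_2} and~\ref{sepep}, and your identification of the per-position interval reduction as the crux of the union bound is correct.

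The gap is in the final combination step. A single good event with failure probability $1/\mathrm{poly}(nm)$ is useless against the worst-case cap $M=(2n)^{O(dk)}2^{O(mdk^2)}$: the fallback term $\Pr[\mathrm{fail}]\cdot M$ dwarfs the claimed bound. Nor can you simply ``choose the failure probability small enough'': for the separation property, forcing $m^{4n}\left(nm\sqrt{\delta}/\sigma\right)^{d}\le 1/M$ requires $1/\delta \gtrsim M^{2/d}=(2n)^{2k}2^{16mk^2}$, which is exponential in $k$ and $m$ and destroys the main term $B(m+k)/\delta$. The paper's resolution (Lemma~\ref{prob_steps} and the proof of Theorem~\ref{thm:mainsmooth}) is a layered tail bound: choosing $\varepsilon$ as a function of a scale parameter $s$ gives $\Pr\{X\ge sK\}\le s^{-d/2}+(\text{tiny})$, whence $\mathbb{E}[X]\le K+\sum_{s\le M/K}Ks^{-d/2}\le K\left(2+\ln\frac{M}{K}\right)$, exploiting that the anti-concentration probability decays polynomially in the scale at which separation fails. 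Note also that $\ln(M/K)=\Theta(mdk^2\ln n)$ is polynomial, not logarithmic, in $m$, $d$ and $k$, and likewise $B$ must be taken with failure probability roughly $1/M$, which costs $B=O(\sigma^2 d^3k^4m^2\ln^3(nm))$ rather than your $O(d\ln(nm))$. So your assertion that tuning the failure probabilities ``only affects logarithmic factors'' is quantitatively wrong; these polynomial corrections are precisely where several of the $m$, $d$ and $k$ factors in the stated bound originate.
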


To proof Theorem~\ref{thm:mainsmooth}, we first bound the probability that the normalization property and the separation property hold for suitable parameters. 

\begin{lemma} \label{normalization_2}
We have
 \[ \mathbb{P} \{  \max_{1 \leq i \leq n} \max_{y \in \gamma_i} \norm{y} \geq \sqrt{d}+t \sigma \sqrt{2 d \ln n m}  \} \leq e^{1-t^2} \]
\end{lemma}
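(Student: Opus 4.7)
The plan is to reduce the lemma to Gaussian concentration of norm plus a union bound. Write each perturbed input point as $y = y' + g$, where $y' \in [0,1]^d$ is the adversarial (deterministic) base point and $g \sim \mathcal{N}(0, \sigma^2 I_d)$ is the Gaussian perturbation. Since $y' \in [0,1]^d$, we have $\norm{y'} \le \sqrt{d}$, so by the triangle inequality $\norm{y} \le \sqrt{d} + \norm{g}$. Thus it suffices to bound $\mathbb{P}\{\max_{y \in X} \norm{g_y} \ge t\sigma\sqrt{2d\ln(nm)}\}$, where the maximum is over the $nm$ independent Gaussian perturbation vectors.

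To bound the norm of a single perturbation vector, I would use standard Gaussian concentration: either the Lipschitz concentration inequality (noting that $g \mapsto \norm{g}$ is $1$-Lipschitz and $\mathbb{E}\norm{g} \le \sigma\sqrt{d}$, which gives $\mathbb{P}(\norm{g} \ge \sigma\sqrt{d} + u) \le \exp(-u^2/(2\sigma^2))$), or alternatively a direct chi-squared tail bound such as the Laurent--Massart inequality applied to $\norm{g}^2/\sigma^2 \sim \chi^2_d$. Plugging in a deviation of the form $u = t\sigma\sqrt{2\ln(nm)}$ (scaled by an extra factor of $\sqrt{d}$ to match the statement) yields a single-point failure probability of roughly $(nm)^{-t^2}$, absorbing the $\sigma\sqrt{d}$ mean into the leading $\sqrt{d}$ term of the claimed bound (which is valid since $\sigma \le 1$).

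Finally, I would apply a union bound over the $nm$ input points to promote the single-point bound to the claimed uniform statement. The choice of the $\sqrt{2d\ln(nm)}$ scaling precisely absorbs the $nm$ factor from the union bound into the exponent, leaving a tail of the form $(nm)^{1-t^2} \le e^{1-t^2}$ for $t \ge 1$ and $nm \ge e$ (the claim is trivially true for $t \le 1$ since $e^{1-t^2} \ge 1$, and the degenerate small-$nm$ case can be handled by inspection).

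The main obstacle is purely bookkeeping: matching the specific constants $\sqrt{d}$ (for the deterministic part) and $t\sigma\sqrt{2d\ln(nm)}$ (for the random part) with the exact failure probability $e^{1-t^2}$. The $\sqrt{d}$ factor inside the square root in the random term is slack compared to the true concentration (which would only need $\sigma\sqrt{2\ln(nm)} + \sigma\sqrt{d}$), so there is plenty of room to absorb the mean $\mathbb{E}\norm{g}$ and the constant from the union bound. No new mathematical ideas beyond classical Gaussian tail bounds are required.
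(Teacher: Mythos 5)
Your argument is essentially identical to the paper's proof: the same decomposition of each point into a deterministic part of norm at most $\sqrt{d}$ plus a Gaussian perturbation, the triangle inequality, a standard tail bound for the norm of a Gaussian vector, and a union bound over the $nm$ points (which the paper leaves implicit, relying on the $\sqrt{2\ln nm}$ factor to absorb it). Your version is, if anything, more explicit about the concentration inequality used and about handling the trivial regime $t\le 1$.
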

\begin{proof}
By our assumptions every vector in input sequences is given by $D + Y$ where $D$ is a deterministic vector with norm at most $\sqrt{d}$ and $Y$ is a random vector with Gaussian i.i.d coordinates $\mathcal{N}(0,\sigma)$. By triangle inequality $\norm{D+Y} \leq \sqrt{d} + \norm{Y}$. Since $Y$ has Gaussian i.i.d coordinates $\mathcal{N}(0,\sigma)$, we can apply the standard tail bound  $\mathbb{P} \{ \norm{Y} \geq t \sigma \sqrt{d}  \} \leq e^{1-\frac{t^2}{2}}$.
\end{proof}

\begin{lemma} \label{sepep}
 Let $C_{\alpha}$ and $C_{\beta}$ be  average point sequences corresponding to two different assingment maps $\alpha$ and $\beta$. Then, we have
$ \mathbb{P} \{ \norm{C_{\alpha} - C_{\beta}}^2 \leq \varepsilon \} \leq \left( \frac{nm\sqrt{ \varepsilon}}{\sigma } \right)^d$.
Furthermore, the separation property with parameter $\varepsilon$ holds with probability at least 
\[  1 - m^{4n}\left( \frac{nm\sqrt{ \varepsilon}}{\sigma } \right)^d. \]
\end{lemma}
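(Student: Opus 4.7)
My plan is to reduce the first inequality to Gaussian anti-concentration after isolating a single input coordinate, and then to obtain the separation property via a combinatorial union bound over pairs of achievable centers at a single warping position. To set up the first claim, I would use that an assignment map $\pi$ is fully determined by the tuple of sets $(S_i(\pi))_{i=1}^k$, since the warping path $w^{(l)}$ between $\gamma_l$ and $C$ is recoverable as the list of pairs $(t,i)$ with $\gamma_{l,t}\in S_i(\pi)$. Hence $\alpha \neq \beta$ forces some index $i^{*}\in [k]$ with $S_{i^{*}}(\alpha)\neq S_{i^{*}}(\beta)$, and therefore an input point $\gamma_{l^{*},t^{*}}$ lying in exactly one of these two sets, say $\gamma_{l^{*},t^{*}} \in S_{i^{*}}(\alpha)\setminus S_{i^{*}}(\beta)$.

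Writing $v := c_{i^{*}}(\alpha) - c_{i^{*}}(\beta) = \sum_{(l,t)} \lambda_{l,t}\,\gamma_{l,t}$ as a linear combination of the input points, the coefficient $\lambda_{l^{*},t^{*}} = 1/\abs{S_{i^{*}}(\alpha)}$ has absolute value at least $1/(nm)$, because each input sequence contributes at most $m$ points to $S_{i^{*}}(\alpha)$, so $\abs{S_{i^{*}}(\alpha)}\leq nm$. I would then condition on all inputs except $\gamma_{l^{*},t^{*}}$: since $\gamma_{l^{*},t^{*}}$ is the sum of a deterministic vector and an isotropic Gaussian perturbation with covariance $\sigma^{2} I_d$, conditionally $v$ is itself an isotropic Gaussian in $\mathbb{R}^d$ centered at a deterministic point with per-coordinate standard deviation at least $\sigma/(nm)$. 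The event $\norm{v}^{2} \leq \varepsilon$ implies $\abs{v_j}\leq \sqrt{\varepsilon}$ in each coordinate, and since the coordinates of $v$ are independent Gaussians, a product of one-dimensional density bounds gives probability at most $(nm\sqrt{\varepsilon}/\sigma)^d$. Taking expectation over the conditioning preserves this bound, and the stated per-pair inequality then follows from $\norm{c_{i^{*}}(\alpha) - c_{i^{*}}(\beta)}^{2} \leq \norm{C_\alpha - C_\beta}^{2}$.

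For the separation property, I would take a union bound not over pairs of assignment maps but over pairs of achievable subsets at a single warping position, exploiting that $c_{i^{*}}(\alpha)$ depends only on $S_{i^{*}}(\alpha)$. Monotonicity of warping paths forces $S_{i^{*}}(\pi)$, restricted to any single input sequence, to be a contiguous sub-range of $[1,m]$, giving at most $m^{2}$ choices per sequence and thus at most $m^{2n}$ distinct sets $S_{i^{*}}(\pi)$. The number of ordered pairs of distinct such sets is at most $m^{4n}$, and combining with the per-pair probability bound (absorbing the factor $k$ from the choice of $i^{*}$) yields the stated failure probability. The main technical point to get right is the conditioning step: the indices $(i^{*}, l^{*}, t^{*})$ used in the anti-concentration argument must be measurable with respect to the conditioning, which is automatic because they are selected from the combinatorial pair $(\alpha,\beta)$ alone and not from the random perturbations; a secondary, purely combinatorial subtlety is that the $m^{2n}$ subset count ignores cross-position tiling constraints, but is loose only in the harmless direction for a union bound.
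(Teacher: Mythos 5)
Your proposal is correct and follows essentially the same route as the paper's proof: both isolate a single input point in the symmetric difference $S_{i}(\alpha)\triangle S_{i}(\beta)$, condition on all other points, apply Gaussian anti-concentration with the coefficient lower bound $1/(nm)$, and union-bound over the at most $m^{4n}$ pairs of sets formed by contiguous sub-ranges of the input sequences. The only cosmetic difference is that you bound the small-ball probability coordinate-wise via one-dimensional density bounds, whereas the paper bounds the probability that the isolated point lands in a $d$-dimensional ball of radius $nm\sqrt{\varepsilon}$; your explicit treatment of the coefficient and of the measurability of the chosen indices makes the argument slightly more careful but does not change it.
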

\begin{proof}
Since the instances are perturbed and the assignment maps are different, we have with probability $1$ that there exists an $i\in[k]$ such that $c_i(\alpha)\neq c_i(\beta)$. The event $\norm{C_{\alpha} - C_{\beta}}^2\leq \varepsilon$ further implies $\norm{c_i(\alpha)-c_i(\beta)}\leq \sqrt{\varepsilon}$.
We bound the probability that this event  $\norm{c_i(\alpha)-c_i(\beta)}\leq \sqrt{\varepsilon}$ occurs for the fixed $c_i(\alpha)$ and $c_i(\beta)$. 

Let $S_i(\alpha)$ and $S_i(\beta)$ denote the sets of points in $X$ that were assigned to $c_i(\alpha)$ and $c_i(\beta)$ respectively. Since $c_i(\alpha)\neq c_i(\beta)$, it immediately follows that $S_i(\alpha)\neq S_i(\beta)$. So we can fix a point $s\in S_i(\alpha) \triangle S_i(\beta) $. We let an adversary fix all points in $S_i(\alpha) \cup S_i(\beta)\setminus \{s\}$. In order for $c_i(\alpha)$ and $c_i(\beta)$ to be $\sqrt{\epsilon}$-close, we need $s$ to fall into a hyperball of radius $nm\sqrt{\epsilon}$. Because $s$ is drawn from a Gaussian distribution with standard deviation $\sigma$, this happens with probability at most $\left(\frac{nm\sqrt{\epsilon}}{\sigma}\right)^d$. So in total, we have $  \mathbb{P} \{ \norm{c_i(\alpha)-c_i(\beta)}\leq \sqrt{\varepsilon} \}
    \leq \left( \frac{nm\sqrt{ \varepsilon}}{\sigma } \right)^d$ and therefore $\mathbb{P} \{ \norm{C_{\alpha} - C_{\beta}}^2 \leq \sqrt{\varepsilon} \}
    \leq \left( \frac{nm\sqrt{ \varepsilon}}{\sigma } \right)^d$.

To prove the second claim, we apply a union bound over all possible choices for $c_i(\alpha)$ and $c_i(\beta)$. Since each $c_i(\alpha)$ and $c_i(\beta)$ is uniquely determined by its assigned points $S_i(\alpha)$ and $S_i(\beta)$ it suffices to bound these. For each input point sequence, there are at most $\binom{m}{2}$ possible choices for the set of points that get assigned to a fixed center point: This is the case since all points that get assigned to the same center point have to be consecutive. So the points that get assigned to the center point are uniquely determined by the first and the last point that gets assigned to the center point. For $n$ input point sequences all possible assignments to an arbitrary center point are therefore bounded by $\binom{m}{2}^n$. Since we choose two center points $c_i(\alpha)$ and $c_i(\beta)$, there are at most
$\binom{m}{2}^{2n}\leq m^{4n}$
possible choices for the assigned points $S_i(\alpha)$ and $S_i(\beta)$ that determine $c_i(\alpha)$ and $c_i(\beta)$.
The statement follows by applying the union bound over  possible choices for $c_i(\alpha)$ and $c_i(\beta)$.
\end{proof}

As a combination of Lemma~\ref{normalization_2} and Lemma~\ref{sepep}, we get the following Lemma.
\begin{lemma}\label{prob_steps}
Let $\gamma_1,\gamma_2,\ldots,\gamma_n$ be independent $m$-length sequences with $\mathcal{N}(0,\sigma)$ perturbation, and suppose $d \geq 2$ and $k \leq m$. Then, the  DBA algorithm implemented on the input data $\gamma_1,\gamma_2,\ldots,\gamma_n$ converges in at most 
\[  s \left(\frac{a_1 n^2 m^{8\frac{n}{d}+5}d^3k^5\ln(nm)^3}{\sigma^2}  \right) \]
steps with probability at least $1- s^{-\frac{d}{2}} - (2n)^{-dk} -  2^{-8mdk^2}$ for all $s\geq1$ where $a_1$ is constant.
\end{lemma}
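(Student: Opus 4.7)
The plan is to combine the deterministic bound $T\leq B(m+k)/\varepsilon$ of Theorem~\ref{niceupperbound} with the two probabilistic estimates Lemma~\ref{normalization_2} (which controls $B$) and Lemma~\ref{sepep} (which controls the minimum separation $\varepsilon$ between distinct average sequences). Because Theorem~\ref{niceupperbound} is conditional on both the normalization and separation properties, the overall failure event decomposes into ``normalization fails'' plus ``separation fails'' (plus one auxiliary event), and the three tail terms $s^{-d/2}$, $(2n)^{-dk}$, $2^{-8mdk^2}$ in the statement correspond to these three contributions.

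Concretely, I would first choose $\varepsilon$ in Lemma~\ref{sepep} so that its failure bound $m^{4n}(nm\sqrt{\varepsilon}/\sigma)^d$ equals $s^{-d/2}$; solving yields $\varepsilon = \Theta(\sigma^{2}/(n^{2}m^{2+8n/d}s))$, which contributes exactly the $s^{-d/2}$ tail. Next, I would pick the parameter $t$ in Lemma~\ref{normalization_2} so that $e^{1-t^{2}}\leq 2^{-8mdk^{2}}$, i.e.\ $t=\Theta(\sqrt{mdk^{2}})$; together with $\sigma\leq 1$ this yields $B=(\sqrt{d}+t\sigma\sqrt{2d\ln(nm)})^{2}=O(md^{2}k^{2}\ln(nm))$ and contributes the $2^{-8mdk^{2}}$ tail. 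Substituting these two values into $B(m+k)/\varepsilon$ and using $k\leq m$ produces a bound of the stated form $s\cdot\widetilde{O}(n^{2}m^{8n/d+O(1)}d^{O(1)}k^{O(1)}/\sigma^{2})$; matching the claimed exponents $m^{8n/d+5}$, $d^{3}$, $k^{5}$, and $\ln(nm)^{3}$ then amounts to tracking the additional polynomial factors introduced by a finer union bound over the $k$ center indices, over the distinct-pair structure of Lemma~\ref{sepep}, and over coordinate directions arising when controlling Gaussian tails, all of which the stated bound absorbs with slack via the constant $a_1$.

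The third tail $(2n)^{-dk}$ should arise from an auxiliary union bound needed so that the separation property applies to every pair of iterates actually visited by DBA rather than to a single generic pair. A natural source is the algebraic cell count of Lemma~\ref{arrangement}: the family $\mathcal{F}$ of quadratic forms defines $O((2n|\Warping_{m,k}^{*}|^{2})^{dk})$ sign cells, and enforcing separation uniformly across them (so that consecutive iterates, which lie in distinct cells by Lemma~\ref{lem:cell_visit}, are simultaneously well-separated) introduces a tail of magnitude $(2n)^{-dk}$ after choosing the associated threshold appropriately. The final claim then follows by a union bound over the three failure modes and substitution into Theorem~\ref{niceupperbound}. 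The main obstacle is coordinating the three parameter choices simultaneously: the values of $\varepsilon$, $t$, and the algebraic-cell threshold must be set so that the resulting bound $B(m+k)/\varepsilon$ reaches precisely the target size while carving the failure probability into the three specific terms $s^{-d/2}$, $(2n)^{-dk}$, $2^{-8mdk^{2}}$, and while the extra polynomial factors contributed by the union bounds remain comfortably below the exponents claimed in the lemma.
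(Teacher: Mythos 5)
Your first two paragraphs reproduce the paper's argument: the paper likewise plugs $\varepsilon=\sigma^2/(sn^2m^{8\frac{n}{d}+2})$ into Lemma~\ref{sepep} (so that $m^{4n}(nm\sqrt{\varepsilon}/\sigma)^d$ collapses to exactly $s^{-d/2}$), chooses $t$ in Lemma~\ref{normalization_2} large enough that the normalization failure probability is below the remaining tail terms, and feeds the resulting $B$ and $\varepsilon$ into Theorem~\ref{niceupperbound}. Your bookkeeping is fine; with your $t=\Theta(\sqrt{mdk^2})$ you get $B=O(md^2k^2\ln(nm))$ and hence $B(m+k)/\varepsilon=O(sn^2m^{8\frac{n}{d}+4}d^2k^2\ln(nm)/\sigma^2)$, which sits comfortably inside the claimed bound (the paper's larger exponents come from its more generous choice $t=2\ln((2n)^{dk}2^{8mdk^2})=O(mdk^2\ln(4n))$, which inflates $B$ to $O(\sigma^2d^3k^4m^2\ln(nm)^3)$).

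Your third paragraph, however, misidentifies the source of the $(2n)^{-dk}$ term, and the mechanism you propose there would not work. There is no auxiliary union bound over the sign cells of $\mathcal{F}$ anywhere in the smoothed analysis: Lemma~\ref{sepep} already makes the separation property uniform over \emph{all} pairs of distinct assignment maps via the $m^{4n}$ union-bound factor, so no further uniformization over visited iterates is needed; moreover, a union bound over the $O((2n\abs{\Warping^*_{m,k}}^2)^{dk})$ cells would produce a tail that grows, not shrinks, like $(2n)^{dk}$. In the paper the term $(2n)^{-dk}$ is simply slack folded into the normalization tail: the single choice of $t$ above makes $e^{1-t^2}\leq (2n)^{-dk}\cdot 2^{-8mdk^2}$, which is trivially at most $(2n)^{-dk}+2^{-8mdk^2}$. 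The peculiar shape of these two terms is dictated not by this lemma but by the proof of Theorem~\ref{thm:mainsmooth}, where they must cancel against the worst-case iteration count $M=a_2(2n)^{dk}2^{8mdk^2}$ when summing the tail probabilities. Since your version of the normalization step already achieves failure probability at most $2^{-8mdk^2}$, your total failure probability $s^{-d/2}+2^{-8mdk^2}$ is stronger than what the lemma claims, so the statement still follows; only the explanation of where $(2n)^{-dk}$ "comes from" needs to be discarded.
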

\begin{proof}
In Lemma \ref{normalization_2}, we set $t= 2 \ln( (2n)^{dk} 2^{8mdk^2}) \leq 16mdk^2 \ln(4 n)$ to get that the normalization property
is fulfilled for some $B\leq a_2\sigma^2d^3k^4m^2\ln(nm)^3$ with probability at least $1- (2n)^{-dk} - 2^{-8mdk^2}$ where $a_2$ is constant. Then we use Lemma \ref{sepep} with 
$\varepsilon = \frac{\sigma^2}{sn^2m^{8\frac{n}{d}+2}}$ and apply Theorem \ref{niceupperbound}.
\end{proof}

With the help of the Lemma~\ref{prob_steps}, we are now ready to prove Theorem~\ref{thm:mainsmooth}.
\begin{proof}[Proof of Theorem~\ref{thm:mainsmooth}]
Let $X$ be the number of steps that DBA performs. By Theorem~\ref{thm:worstcasebound}, we have that $X\leq a_2(2n)^{dk}2^{8mdk^2}$ for some constant $a_2$. We set $M\coloneqq a_2(2n)^{dk}2^{8mdk^2}$ for simplicit and get
\[ \mathbb{E} [ X ] = \sum_{i=1}^{M} \mathbb{P} \{ X \; \geq  i \} \leq  K + \sum_{t=K}^{M} \mathbb{P} \{ X \; \geq t  \}  \] for any $K$.
We set $K\coloneqq a_1 n^2 m^{8\frac{n}{d}+5}d^3k^5\ln(nm)^3$. By Lemma~\ref{prob_steps}, it is \[ \mathbb{P} \{ X \; \geq sK  \}\leq s^{-\frac{d}{2}}+(2n)^{-dk}+m^{-8mdk^2}\]
for all $s\geq 1$. Therefore, we have
\[ \sum_{t=K}^{M} \mathbb{P} \{ X \geq t\}  \leq K\cdot\sum_{s=1}^{\frac{M}{K}} s^{-\frac{d}{2}} + (2n)^{-dk} +  m^{-8mdk^2} \]
Since $d \geq 2$, we have  $s^{\frac{-d}{2}} \leq \frac{1}{s}$. Moreover, $ \frac{M}{K}  \left( (2n)^{-dk} +  m^{-8mdk^2} \right) \leq 1 $. So, we have
\[ \sum_{t=K}^{M} \mathbb{P} \{ X \geq t\}  \leq K \left( 1 + \sum_{s=1}^{\frac{M}{K}} \frac{1}{s}\right) \leq K + K\ln \frac{M}{K} \]
Hence, 
\begin{align*}
    \mathbb{E} X \leq 2K + K\ln \frac{M}{K} \leq 2K +  K mdk^2 \ln\left(\frac{4a_2}{a_1}n\right)
\end{align*}

\end{proof}
\begin{remark}
Note that for the discrete case, where the positions of the points in the center point sequence are restricted to the input points, we would get an upper bound on the number of iterations which would be polynomial in $n$, instead of exponential in $n$, since for the positions of $c_i(\alpha)$ and $c_i(\beta)$ in the proof of Lemma~\ref{sepep}, there are only ${\binom{nm}{2}}$ instead of ${\binom{m}{2}}^{2n}$ possible choices.
\end{remark}

\section{Lower bound}\label{sec:lowerbound}

In this section we give a lower bound on the worst case number of iterations until DBA converges for two input point sequences of length $m\in \NN$ and an average point sequence of length $k=\Theta(m)$ in $\RR^2$. By duplicating the input point sequences this lower bound immediately yields a lower bound for $n\in \NN$ input point sequences of length $m$ and an average point sequence of length $k=\Theta(m)$ in $\RR^d$. To construct the instance for our lower bound, we directly draw from the work of Andrea Vattani on the lower bound for the worst case number of iterations of the $k$-means method \cite{vattani2011}. We take his exact construction and modify it to suit our needs by connecting input points to point sequences and scaling up the integer weights of the points by a constant factor. Here a point with integer weight resembles multiple points in the same position that are consecutive points of the corresponding point sequence. The resulting instance will lead to a lower bound that is exponential in $m$.

\subsection{Construction}
Before we give a detailed construction, we give a motivation, why we modify a $k$-means construction to show a lower bound for the number of iteration of the  DBA algorithm. The reason is that the algorithms are very similar. In fact, we can observe that DBA can behave exactly like the $k$-means algorithm and converge in the same number of iterations if in all iterations the optimal $k$-means assignment maps are also valid DBA assignment maps.

 To construct a suitable DBA instance, the idea is to connect the points of the instance in \cite{vattani2011} to point sequences such that this condition holds. This is not directly possible, so we modify the instance such that we still get the same number of iterations for both algorithms.  A challenge here is that at some steps of the algorithm some points of the average point sequence would not get mapped to any point of one of the input point sequence by the assignment map corresponding to $k$-means. This happens independent of the choice of the point sequences and implies that the respective assignment maps are not valid. Intuitively, the challenge is solved by letting the problematic points on the average sequence "steal" points from neighbouring clusters. To ensure that the general structure remains unchanged, we replace all points of the input point sequences by multiple points in the same position. In the following, we give a detailed description of the construction of the DBA instance based on the instance in \cite{vattani2011}.

\begin{figure}[ht]
\centering
\begin{tikzpicture}[scale=1.47]
delta=0.25
alpha=4.0
\coordinate (A) at (3,0);  
   \filldraw[black] (0,0) circle (0.3pt) node[anchor=south east] {$P_i$};
   \filldraw[black] (0.06,0) circle (0.15pt) node[anchor=south west] {$Q_i$};
   \filldraw[black] (1,-0.5) circle (0.5pt) node[anchor=north west] {$A_i$};
   \filldraw[black] (1,0.5) circle (0.5pt) node[anchor=west] {$B_i$};
   \filldraw[black] (1,0.70223) circle (0.7pt) node[anchor=west] {$C_i$};
   \filldraw[black] (1,1.35739) circle (1pt) node[anchor=west] {$D_i$};
   \filldraw[black] (0,1) circle (2pt) node[anchor=north west] {$E_i$};
   \path [draw=black,loosely dashed,line width=0.1pt] (0,0) circle (1cm);
   \path [draw=black,loosely dashed,line width=0.1pt] (0,0) circle (1.25cm);
   \draw[loosely dashed,line width=0.1pt] (0,0)--(-1,0);
   \tkzText[below=2pt,black](-0.80,0){$r_i$}
   \draw[loosely dashed,line width=0.1pt] (0,0)--(-0.707*1.25,-0.707*1.25);
   \tkzText[right=2pt,black](-0.707*0.9,-0.707*0.9){$R_i$}
   
   \filldraw[black] (0,0)+(A) circle (0.3pt) node[anchor=south east] {$P_i$};
   \filldraw[black] (0.06,0)+(A) circle (0.15pt) node[anchor=south west] {$Q_i$};
   \filldraw[black] (1,-0.5)+(A) circle (0.5pt) node[anchor=north west] {$A_i$};
   \filldraw[black] (1,0.5)+(A) circle (0.5pt) node[anchor=west] {$B_i$};
   \filldraw[black] (1,0.70223)+(A) circle (0.7pt) node[anchor=west] {$C_i$};
   \filldraw[black] (1,1.35739)+(A) circle (1pt) node[anchor=west] {$D_i$};
   \filldraw[black] (0,1)+(A) circle (2pt) node[anchor=north west] {$E_i$};
   \draw[blue,line width=0.2pt] (3,1)--(3+1,1.35739);
   \draw[blue,line width=0.2pt] (2.4,0.3)--(3,1);
   \draw[blue,line width=0.2pt,-stealth] (3+1,1.35739)--(3+1,-0.5)--(3.5+1,-0.1);
   \draw[purple,line width=0.4pt] (2.4,-0.2)--(3,0)--(3.06,0)--(3+1,-0.5);
   \draw[purple,line width=0.4pt,-stealth] (3+1,-0.5)--(3.5+1,-0.3);
   
   \tkzText[below=2pt,blue](2.4,0.3){$\curveone$}
   \tkzText[below=2pt,purple](2.4,-0.2){$\curvetwo$}
   \end{tikzpicture}
\caption{Schematic drawing of gadget $G_i$ ($i\geq1$) in $k$-means instance (Left) and in DBA instance  (Right).}
\label{fig:points}
\end{figure}
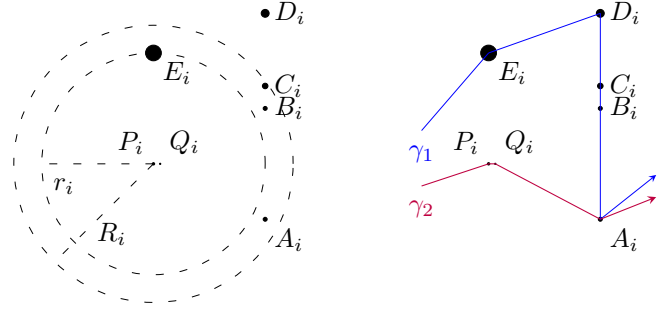

The instance $\kinst$ of Andrea Vattani consists of a sequence of gadgets $G_0,G_1,\dots,G_{\lceil\frac{k-1}{2}\rceil}$. For each $i\neq 0$, the gadget $G_i$ is given by a tuple $(\positions_i,\centers_i,r_i,R_i)$ where the set $\positions_i=\{P_i,Q_i,A_i,B_i,C_i,D_i,E_i\}\subset \RR^2$ determines the positions of the input points, $\centers_i\subset \RR^2$ determines the initial position of two centers corresponding to the gadget and $r_i>0$ and $R_i>r_i$ are the inner and outer radius of the gadget. In each position of $\positions_i$ the gadget contains a constant number of points determined by weights $w_P,w_Q,w_A,w_B,w_C,w_D,w_E$ that are independent of the index $i$ (see Table~\ref{tab:positions} for the exact values).  Note that for our use of the instance, we scale the weights by a constant factor of $100M$, where $M\in \NN$ will be determined later. The positions of each gadget are identical up to translation of the position $P_i$  and up to a scaling of the relative distance of all other positions to $P_i$. The  position can be seen in Table~\ref{tab:positions}. A $k$-means instance needs $k$ initial centers $C=\{c_1,\dots,c_k\}$. We say that a center point $c_j$ corresponds to a gadget $G_i$, if $\lceil \frac{j-1}{2} \rceil=i$. The initial position $\centers_i$ of the two centers corresponding to $G_i$ are determined as the mean of the points in an initial cluster. The two initial clusters for each gadget $G_i$ with $i\neq0$ are defined as all points in the position $A_i$ and all points in the positions $P_i,Q_i,B_i,C_i,D_i,E_i$. For an illustration of one Gadget $G_i$ of $\kinst$ with $i\neq 0$ see Figure~\ref{fig:points}. The gadget $G_0$ is given by $(\positions_0,C_0)$ where $\positions_0=C_0=\{F\}$. So all the $w_F$ points in position $F=(0,0)$ determine the position $(0,0)$ of the initial center $c_1$.

\begin{table*}[ht]
    \centering \def\arraystretch{1.3}
    \begin{tabular}{|c|c|c|}
    
         \hline 
         Weights  & Positions $\positions_i$ & Other values  \\
         \hline
         $w_P=100M$ & $P_i=S_{i-1}+(1-\eps)R_i(1,0)$ & $\eps = 10^{-6}$ \\
         $w_Q=1M$ & $Q_i=P_{i}+r_i(10^{-5},0)$ & $r_1=1$  \\
         $w_A=400M$ & $A_i=P_{i}+r_i(1,-0.5)$ & $r_i\approx 40.41608r_{i-1}$  \\
         $w_B=400M$ & $B_i=P_{i}+r_i(1,0.5)$ & $R_i=1.25r_i$  \\
         $w_C=1100M$ & $C_i\approx P_{i}+r_i(1,0.70223)$ &  $w_F=5000M$\\
         $w_D=3100M$ & $D_i\approx P_{i}+r_i(1,1.35739)$ & $F=S_0=(0,0)$ \\
         $w_E=27400M$ & $E_i=P_{i}+r_i(0,1)$ & $S_i\approx P_i+r_i(1,0,99607)$  \\
         \hline
    \end{tabular}
    \caption{ Approximate weights and positions of the points in the $k$-means instance $\kinst$ of \cite{vattani2011}.}    \label{tab:positions}
\end{table*}

To construct an instance $\dbainst$ for DBA out of the instance $\kinst$ for the $k$-means algorithm, we just connect the points of the gadgets together to form two input point sequences. The first point sequence $\curveone$ starts with half of the points in position $F$ followed by all points in the positions $E_i,D_i,C_i,B_i$ and half of the points in position $A_i$ in the given order and in the natural order $1,\dots,\lceil\frac{k-1}{2}\rceil$ of the gadgets $G_i$. The second point sequence $\curvetwo$ starts with the other half of the points in position $F$ followed by all points in the positions $P_i,Q_i$ and the other half of the points in position $A_i$ (see also Figure~\ref{fig:points}). The initial assignment map is chosen as in the $k$-means instance so that the initial average point sequence has its points $c_1,\dots,c_k$ in the same position as the centers of the $k$-means instance.

\subsection{Analysis}
\label{app:lowerbound_ana}

 For a run of the $k$-means algorithm on the instance $\kinst$, all the occurring assignment maps between points and centers in any step of the algorithm are described in  \cite{vattani2011}. Directly from the positions of the center points given by these assignment maps, we can observe the following properties for any set of center points that is present in any iteration of the $k$-means algorithm. 

\begin{observation}\label{obs:instanceprop}
Let $C=\{c_1,\dots,c_k\}$ be a set of center points present in any iteration of the $k$-means algorithm on $\kinst$. The following properties hold for $C$ and any $0\leq i \leq \lceil\frac{k-1}{2}\rceil$:
\begin{enumerate}
    \item Let $p$ be a point that lies in the gadget $G_i$. Let  $c_j\in C$ be the closest center point to $p$ (i.e. $c_j=\argmin_{c\in C}\norm{p-c}$). We have
    \[\max_{c\in\{c_{j-1},c_{j},c_{j+1}\}}\norm{p-c}\leq \alpha r_i\]
    where $\alpha>0$ is a constant that is independent of $i,p$ and the number of gadgets.
    \item Let $p$ be a point that lies in either one of the gadgets $G_{i-1},G_i,G_{i+1}$. Let further $c'=\argmin_{c\in C}\norm{p-c}$ be the closest center point and $c''=\argmin_{c\in C\setminus \{c'\}}\norm{p-c}$ bet the second closest center point. We have
    \[\norm{p-c''}-\norm{p-c'}\geq \beta r_i\]
    where $\beta>0$ is a constant that is independent of $i,p$ and the number of gadgets.
\end{enumerate}
\end{observation}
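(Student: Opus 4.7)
The plan is to exploit the fact that Vattani~\cite{vattani2011} explicitly enumerates the complete sequence of cluster assignments, and hence center configurations, that occur during the $k$-means run on $\kinst$. Because the gadgets $G_1,\dots,G_{\lceil(k-1)/2\rceil}$ are geometrically similar up to translation and scaling by $r_i$, both properties should reduce to uniform estimates on a single generic gadget that propagate along the recursion by the scaling law $r_i\approx 40.4\, r_{i-1}$.

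First I would collect some basic geometric facts from Table~\ref{tab:positions}: every point in $\positions_i$ lies within distance $(1+\sqrt{2})R_i = O(r_i)$ of $P_i$, and the identities $P_i = S_{i-1} + (1-\eps)R_i(1,0)$ together with $S_{i-1}\approx P_{i-1} + r_{i-1}(1,0.99607)$ imply $\norm{P_i - P_{i\pm 1}} = O(r_i)$ with an absolute constant. Consequently every input point in $G_{i-1}\cup G_i\cup G_{i+1}$ lies within $C_0\, r_i$ of $P_i$ for a universal $C_0$. For Property~1, I would then use that in any of Vattani's configurations the center $c_j$ associated with gadget $G_{i'}$ is the centroid of a cluster contained in $G_{i'-1}\cup G_{i'}\cup G_{i'+1}$; convexity of the centroid places $c_j$ within $O(r_{i'})$ of $P_{i'}$. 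The three consecutive centers $c_{j-1},c_j,c_{j+1}$ correspond to indices $i'\in\{i-1,i,i+1\}$, each within $O(r_i)$ of $P_i$ (using $r_{i+1}\leq 41\, r_i$). Combined with $\norm{p-P_i} = O(r_i)$ and the triangle inequality, this yields an absolute constant $\alpha$ with the required bound.

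For Property~2, the plan is to split into two regimes depending on whether $c'$ and $c''$ belong to the same gadget. In the first case, the margin comes directly from the explicit centroid positions computed in~\cite{vattani2011}: this is exactly the separation that forces $k$-means to reassign points from one step to the next, so a gap of order $r_i$ is already a byproduct of Vattani's dynamics analysis. In the second case, one center lies in $G_{i'}$ and the other in $G_{i''}$ with $i'\neq i''$, and the rapid scaling $r_{i+1}\approx 40.4\, r_i$ places centers of non-adjacent gadgets at distance $\gg r_i$ from $p$, while centers of adjacent gadgets still differ by $\Omega(r_i)$ in their distances to $p$. In every subcase one extracts a lower bound of the form $\beta r_i$ with an absolute constant $\beta$.

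The principal obstacle is not any single estimate but the bookkeeping over the finitely many cluster configurations tabulated in~\cite{vattani2011}: one must verify that the constants $\alpha$ and $\beta$ extracted case by case are genuinely independent of $i$ and of the total number of gadgets. Self-similarity of the gadgets should make this routine, since it suffices to verify the two properties for one generic gadget and one generic configuration type and then invoke the scaling law in $r_i$ to lift the bounds uniformly to all indices. The only nontrivial point where one must be careful is the interface between gadgets $G_0$ (with just a single center at the origin) and $G_1$, which needs a separate ad hoc check before the recursion kicks in.
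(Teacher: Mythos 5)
Your proposal takes essentially the same route as the paper: the paper offers no detailed proof of this Observation, justifying it solely by noting that the properties can be read off directly from the explicit center positions enumerated in Vattani's analysis, exactly the strategy you describe. Your additional scaffolding (the $O(r_i)$ diameter bounds from Table~\ref{tab:positions}, the scaling law $r_{i+1}\approx 40.4\,r_i$, and the case split for Property~2) is a sound and, if anything, more explicit version of the same inspection argument, including the correct caveat about checking the $G_0$/$G_1$ interface separately.
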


Furthermore, it is easy to check that the following method always creates a valid assignment map to the center point sequence in any iteration of DBA on the instance $\dbainst$, based on the assignment map in the same iteration of the $k$-means algorithm. 

\textit{Method for creating valid assignment map:} Take a fixed iteration of the $k$-means algorithm on $\kinst$. Let $C'=(c'_1,\dots,c'_k)$ be the sequence of centers in this iteration of the $k$-means algorithm and let $C=(c_1,\dots,c_k)$ be the points of the average point sequence in this iteration of DBA. Let $\kassign$ be the assignment map that assigns each point in $\kinst$ to its nearest neighbor in $C'$. We can interpret $\kassign$ also as an assignment map of $\dbainst$ that assigns points of $\dbainst$ that lie in the same position as points of $\kinst$ to points of the center point sequence $C$ that have the same index as the assigned centers in $C'$. Let $I$ be the set of the indices of all points in  $C$ that do not get assigned to any point of the point sequence $\curvetwo$ by this assignment map $\kassign$. Note that there are no $i,j\in I$ with $i=j+1$. For each point $j\in I$ take the last point $p_-(c_j)$ of $\curvetwo$ that got assigned to $c_{j-1}$ and the first point $p_+(c_j)$ of $\curvetwo$ that got assigned to $c_{j+1}$. Choose $p(c_j)=\argmin_{p\in\{p_-(c_j),p_+(c_j)\}}\norm{c_{j}-p}$. Replace the previous assignment of $p(c_j)$ with $c_j$. Since we scale up the instance $\dbaassign$ by a constant factor $M$, there are at least $M$ points of $Q$ at position $p(c_j)$. Only reassign the first respectively last of these points to  $c_j$. Denote the new created assignment map with $\dbaassign$.

In the following, we prove that the created valid assignment map $\dbaassign$ is also always an optimal assignment map for the instance $\dbainst$ (scaled by a suitable constant $M$) and its corresponding center point sequence. This result then directly implies that DBA on $\dbainst$ needs as many iterations as the $k$-means algorithm on $\kinst$ to converge, since the assignment map changes in each two consecutive iterations of DBA, in which the corresponding assignment map of the $k$-means algorithm changes. So DBA only converges in any iteration, in which the $k$-means algorithm converges.

\begin{theorem}\label{thm:assignopt}
Let $M>4\frac{\alpha}{\beta}+4\frac{\alpha^2}{\beta^2}$. For any iteration of DBA on the instance $X$ scaled up by the factor $M$, the valid assignment map $\dbaassign$ is an optimal assignment map.
\end{theorem}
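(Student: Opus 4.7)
The plan is to compare any valid competing assignment $\pi^{*}$ against a (generally non-valid) nearest-neighbor baseline $\pi_{NN}$, under which every input copy is independently assigned to its nearest center in $C$. For any valid DBA assignment $\tau$ we have $\Phi(\tau) \geq \Phi(\pi_{NN})$, because every input copy participates in at least one pair of the two warping paths underlying $\tau$ and each such pair costs at least as much as the nearest-neighbor pair. Defining a non-negative per-copy excess $e_p(\tau) \geq 0$ so that $\Phi(\tau) - \Phi(\pi_{NN}) = \sum_p e_p(\tau)$, the goal reduces to showing $\sum_p e_p(\pi^{*}) \geq \sum_p e_p(\pi)$. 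By construction, $\pi$ coincides with $\pi_{NN}$ except at the $\abs{I}$ stolen copies $p(c_j)$, so the right-hand side collapses to $\sum_{j \in I} e_{p(c_j)}(\pi)$, and Observation~\ref{obs:instanceprop}(1) bounds each such stolen excess by $\alpha^2 r_{i(j)}^2$, where $r_{i(j)}$ is the radius of the gadget containing $c_j$.

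The next step is a disjoint charging argument. For each $j \in I$, validity of $\pi^{*}$ forces at least one copy of $\curvetwo$ to be paired with $c_j$; pick any canonical such copy $q_j$ (say, the lowest-index one). Monotonicity of the warping path between $\curvetwo$ and $C$ ensures that the $q_j$ for distinct $j \in I$ are distinct, so the excesses $e_{q_j}(\pi^{*})$ appear as disjoint summands inside $\sum_p e_p(\pi^{*})$. It therefore suffices to prove the per-index inequality $e_{q_j}(\pi^{*}) \geq e_{p(c_j)}(\pi)$ for every $j \in I$. When $q_j = p(c_j)$ this follows because the warping path of $\pi^{*}$ pairs $q_j$ with $c_j$ and possibly more. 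When $q_j \neq p(c_j)$, Observation~\ref{obs:instanceprop}(2) delivers $e_{q_j}(\pi^{*}) \geq \beta^2 r_{i(q_j)}^2$, which already suffices whenever $r_{i(q_j)} \geq (\alpha/\beta)\,r_{i(j)}$, covering the natural case where $q_j$ lies in the same gadget as $c_j$ or in a later, geometrically much larger gadget (recall from Table~\ref{tab:positions} that the radii grow by a factor of roughly $40$ between consecutive gadgets).

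The delicate remaining case is when $q_j$ lies in a strictly earlier, hence strictly smaller, gadget than $c_j$. Here monotonicity of the $\curvetwo$-warping path of $\pi^{*}$ forces a cascade: every $\curvetwo$-copy positioned between $q_j$ and the natural range of $c_j$ under $\pi_{NN}$ is pushed by $\pi^{*}$ onto a center at least one full gadget away from its nearest center, so its excess is on the order of the squared inter-gadget distance, i.e.\ a constant multiple of $r_{i(j)}^2$. Because the construction places $\Omega(M)$ duplicate copies at each position, the cumulative cascade excess is at least of order $M\beta^2 r_{i(j)}^2$, and the hypothesis $M\beta^2 > 4\alpha\beta + 4\alpha^2$ is precisely what makes this cascade contribution strictly dominate the stolen excess $\alpha^2 r_{i(j)}^2$.

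The main obstacle is the bookkeeping of this cascade: one must verify that the cascade excesses charged to different indices of $I$ do not overlap, so that summing the per-index inequalities yields a genuine subsum of $\sum_p e_p(\pi^{*})$. Disjointness follows because the preimages of distinct centers $c_j$ under the $\curvetwo$-warping path of $\pi^{*}$ are themselves disjoint intervals of $\curvetwo$, and for each $j$ the cascade used in the charging is confined to the interval between $q_j$ and the natural range of $c_j$, which lies inside the preimage of $c_{j-1}$, $c_j$ or $c_{j+1}$. Aggregating the per-index bounds then gives $\Phi(\pi^{*}) \geq \Phi(\pi)$, establishing optimality of the valid assignment map $\pi$.
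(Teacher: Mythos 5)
Your overall strategy---lower-bounding every valid assignment by the nearest-neighbour baseline and charging the excess of the stolen copies against the forced excess of any competitor, with the factor $M$ coming from the duplicated positions---is the same as the paper's. However, there is a gap at the very first step: Observation~\ref{obs:instanceprop} is stated for center points arising in the $k$-means run on $\kinst$, whereas the centers $c_1,\dots,c_k$ against which you measure all excesses are the DBA centers $C_{\dbaassign_1}$, which differ from the $k$-means centers $C_{\kassign_1}$ precisely because of the copies stolen in the \emph{previous} iteration. The paper spends the first half of its proof bounding this drift by $\frac{\alpha}{M}r_i$ per center (using that each cluster contains at least $M$ copies), deducing that the nearest-center structure survives, that the nearest/second-nearest gap degrades to $\left(\beta-2\frac{\alpha}{M}\right)r_i$, and that the stolen distance grows to $2\alpha r_i$. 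Without this step your bounds $\alpha^2 r_{i(j)}^2$ and $\beta^2 r_{i(q_j)}^2$ are unjustified; indeed, the hypothesis $M\beta^2>4\alpha\beta+4\alpha^2$ is exactly what is needed for the drift-corrected inequality $M\left(\beta-2\frac{\alpha}{M}\right)^2>(2\alpha)^2$, whereas with your uncorrected bounds one would only ever need $M\beta^2>\alpha^2$ --- a sign that the correction is missing from your account.

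The second gap is in your middle case. When $q_j$ lies in the same gadget as $c_j$ but is not one of $p_{\pm}(c_j)$, the single-copy excess $e_{q_j}(\pi^{*})\geq\beta^2 r_{i(j)}^2$ does \emph{not} dominate the stolen excess, which can be as large as $4\alpha^2 r_{i(j)}^2$: by Observations~\ref{obs:instanceprop}(1) and (2) one has $\beta\leq\alpha$, so $\beta^2 r_{i(j)}^2<4\alpha^2 r_{i(j)}^2$ and your condition $r_{i(q_j)}\geq(\alpha/\beta)r_{i(j)}$ fails in the same-gadget case you claim it covers. The factor $M$ is indispensable in \emph{every} case where $c_j$ is not served by $p_{+}(c_j)$ or $p_{-}(c_j)$: monotonicity of the warping path then forces all $M$ copies at one of those two positions of $\curvetwo$ to be reassigned away from their nearest center, each paying at least $\min(\Delta_{+}(j),\Delta_{-}(j))\geq\left(\beta-2\frac{\alpha}{M}\right)^2 r_{i(j)}^2$, and only the resulting total $M\left(\beta-2\frac{\alpha}{M}\right)^2 r_{i(j)}^2$ beats $4\alpha^2 r_{i(j)}^2$. (The remaining subcase, where $q_j$ is the one of $p_{\pm}(c_j)$ that $\dbaassign$ did \emph{not} choose, is settled by the minimality of $\dbaassign$'s choice, which your case split also omits.) Your cascade mechanism is the right one, but it must be invoked uniformly, not only when $q_j$ sits in an earlier gadget.
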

\begin{proof}
Fix a  iteration of the $k$-means algorithm on $\kinst$. If we take the first iteration, then the center points (that the new assignment map is based on) are the same as the points of the average point sequence in the first iteration of DBA. Otherwise let $\kassign_1$ be the assignment map in the previous iteration of the $k$-means algorithm and $\dbaassign_1$ be the corresponding assignment map in the previous iteration of DBA.

We first show that the center points of $C_{\kassign_1}$ that correspond to gadget $G_i$ differ from their corresponding center points of $C_{\dbaassign_1}$ by at most $\frac{\alpha}{M}r_i$. From the construction of $\kassign_1$, we know that the assigned points to each center point $c_j(\kassign_1)$ in $\kassign_1$ and $c_j(\dbaassign_1)$ in $\dbaassign_1$ differ by at most one point. Assume that $c_j(\dbaassign_1)$ has one more point $p$. We have
\begin{align*}
    \norm{c_j(\kassign_1)-c_j(\dbaassign_1)}&=\norm{c_j(\kassign_1)-\frac{c_j(\kassign_1)\cdot|S_j(\kassign_1)|+p}{|S_j(\kassign_1)|+1})}\\
    &=\norm{\frac{c_j(\kassign_1)-p}{|S_j(\kassign)|+1}}\\
    &\leq \frac{1}{M}\norm{c_j(\kassign_1)-p}\\
    &\leq \frac{\alpha}{M}r_i
\end{align*}
Here the second to last inequality follows by $|S_j(\pi')|\geq M$ (at least one scaled up point is assigned) and the last inequality follows by the first part of Observation~\ref{obs:instanceprop}. The case that $c_j(\kassign_1)$ has one more point is analogous. By the second property of Observation~\ref{obs:instanceprop}, we know that the difference between the distances of a point from gadget $G_{i-1},G_i$ or $G_{i+1}$ to its closest center and its second closest center on $C_{\kassign_1}$ is at least $\beta r_i$. So, for $M>2\frac{\alpha}{\beta}$, it holds that 
\[2\frac{\alpha}{M}r_i<\beta r_i,\]
and it is ensured that the closest center of each point in both assignment maps $\kassign_1$ and $\dbaassign_1$ has the same index. It further holds that the difference between the distances of a point from gadget $G_{i-1},G_i$ or $G_{i+1}$ to its closest center $c'$ and its second closest center $c''$ on $C_{\dbaassign_1}$ is at least 
\begin{equation}
    \norm{c''-p}-\norm{c'-p}\geq\beta r_i-2\frac{\alpha}{M}r_i. \label{eq:Mdiffnn}
\end{equation} 
This is an important property for showing the optimality of the new assignment map. 

Let $\kassign$ be the assignment map in the fixed iteration of the $k$-means algorithm corresponding to the assignment map $\dbaassign$ in the same iteration of DBA.
 We have to show that $\dbaassign$ is an optimal assignment map between the input point sequences the average point sequence $C_{\dbaassign_1}$ generated from the assignment map of the previous iteration (or the starting average point sequence in the first iteration).
 Let $OPT=\sum_{p\in \dbainst}\min_j\norm{c_j(\dbaassign_1)-p}^2$ be the cost of the assignment map $\kassign$ between the input point sequences and $C_{\dbaassign_1}$. We know that this assignment map is optimal, if it is valid, since each point of the input point sequence has to be assigned to at least one point of the center point sequence and  $\pi'_j$ assigns each point only to its nearest neighbor. But this assignment map is not valid, since there are multiple points of $C_{\dbaassign_1}$ that do not get assigned to any point of the point sequence $\curvetwo$ in instance $\dbainst$. We denote the index set of these center points with $I$ and denote the points of $C_{\dbaassign_1}$ simply with $c_1,\dots,c_k$ instead of $c_1(\dbaassign_1),\dots, c_k(\dbaassign_1)$. 
We show that for suitable $M$ each valid assignment map results in a greater cost than $\dbaassign$, by comparing their resulting costs to $OPT$. 
 Fix an arbitrary subset $I'$ of $I$.  The cost of any assignment map that does not assign any point $c_j$ with $j\in I'$ to either $p_+(c_j)$ or $p_-(c_j)$  is greater than $OPT$ by at least
  \begin{align*}
      M\sum_{c_j\in I'}\min(\Delta_+(j),\Delta_-(j))\\+\sum_{c_j\in I\setminus I'}\norm{c_j-p(c_j)}^2-\min_t(\norm{c_t-p(c_j)}^2)
  \end{align*}
 where 
 \begin{align*}
     \Delta_+(j)&=\min_{t\neq j+1}(\norm{c_t-p_+(c_j)}^2)-\norm{c_{j+1}-p_+(c_j)}^2,\\
     \Delta_-(j)&=\min_{t\neq j-1}(\norm{c_t-p_-(c_j)}^2)-\norm{c_{j-1}-p_-(c_j)}^2.
 \end{align*}
Here we use that for each $c_j\in I$,  either all the $M$ points in position $p_+(c_j)$ or in position $p_-(c_j)$ have to get reassigned to another center point.  The assignment map $\dbaassign$ is the best assignment map that assigns all $c_j\in I$ to either $p_+(c_j)$ or $p_-(c_j)$.
The cost of $\dbaassign$ is exactly
\[\sum_{c_j\in I}\norm{c_j-p(c_j)}^2-\min_t(\norm{c_t-p(c_j)}^2)\]
greater than $OPT$. So, if  $M$ is chosen such that 
\begin{equation}
    M\min(\Delta_+(j),\Delta_-(j))> \norm{c_j-p(c_j)}^2\label{eq:Mopt}
\end{equation} then $\dbaassign$ is an optimal assignment map. 
 By the first part of Observation~\ref{obs:instanceprop}, we have  $\norm{c_j(\kassign_1)-p(c_j)}\leq \alpha r_i$. Since the center points of $C_{\kassign_1}$ that correspond to gadget $G_i$ differ from their corresponding center points of $C_{\dbaassign_1}$ by at most $\frac{\alpha}{M}r_i\leq \alpha r_i$, we further get by triangle inequality
 \[\norm{c_j-p(c_j)}\leq\norm{c_j(\kassign_1)-p(c_j)}+\norm{c_j(\kassign_1)-c_j}\leq 2\alpha r_i\]
Since $c_{j+1}$ is the closest center to $p_+(c_{j})$ and $c_{j-1}$ is the closest center to $p_-(c_{j})$, we get by Equation~(\ref{eq:Mdiffnn}) that
 \[\min(\Delta_+(j),\Delta_-(j)>(\beta-2\frac{\alpha}{M})^2 r_i^2.\]
 Here we used that for any $a>b>0$ it is $(a-b)^2>a^2-b^2$. It remains to show that 
 $M(\beta-2\frac{\alpha}{M})^2>4\alpha^2$. We have
 \begin{align*}
     M(\beta-2\frac{\alpha}{M})^2&=M\beta^2-4\beta\alpha+4\frac{\alpha^2}{M}\\
     &>M\beta^2-4\beta\alpha\\
     &>4\alpha^2
 \end{align*}
 The last inequality follows by the assumption of the theorem that $M>4\frac{\alpha}{\beta}+4\frac{\alpha^2}{\beta^2}$. So Equation~(\ref{eq:Mopt}) is fulfilled  and $\dbaassign$ is an optimal assignment map.
\end{proof}

Since $M$ can be chosen as a constant independent of $k$ the point sequences $\curveone$ and $\curvetwo$ of $\dbainst$ have a length $\Theta(k)$ each. We technically require that both point sequences have the same length. The difference in length can easily be balanced by adding an extra gadget in front of gadget $G_0$ far away from the other gadgets that contains one point of the longer point sequence and the difference in length plus one points of the smaller point sequence. Also add another center point corresponding to the gadget and initialize it at the mean of all points in the gadget. If all the points are far enough away from any other points of the instance, the corresponding center point does not change its position and the gadget does not interfere with any other gadget. By the results of \cite{vattani2011}, we know that the $k$-means algorithm needs $2^{\Omega(k)}$ iterations on the instance $\kinst$. As stated earlier, Theorem~\ref{thm:assignopt} implies that DBA on $\dbainst$ needs the same amount of iterations. We therefore achieve the following lower bound.

\begin{restatable}{theorem}{lowerbound}\label{thm:lowerbound}
Let $k\in \NN$. There is an instance of two point sequences with length $m=\Theta(k)$ in the plane such that DBA needs $2^{\Omega(k)}$ iterations to converge.
\end{restatable}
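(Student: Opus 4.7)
The plan is to import Vattani's exponential lower bound for $k$-means into the DBA setting by constructing two point sequences $\curveone,\curvetwo$ whose DBA trajectory coincides, step by step, with the $k$-means trajectory on Vattani's hard instance $\kinst$. First I would take $\kinst$---a sequence of $\Theta(k)$ gadgets $G_i=(\positions_i,\centers_i,r_i,R_i)$ already known to cause $2^{\Omega(k)}$ $k$-means iterations---and linearize its weighted points into two curves: $\curveone$ visits the gadgets in order, collecting half of $F$ followed by $E_i,D_i,C_i,B_i$ and half of $A_i$; $\curvetwo$ visits the remaining half of $F$ and then $P_i,Q_i$ and the other half of $A_i$. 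This traversal order is chosen so that the $k$-means nearest-neighbor assignment corresponds, up to a small local fix-up, to a pair of valid warping paths from $\curveone$ and $\curvetwo$ into a length-$k$ center sequence $C$.

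The main obstacle is that the raw nearest-neighbor assignment is \emph{not} a valid DBA assignment in general: some centers $c_j$ receive no point from $\curvetwo$, breaking the warping-path condition. My plan is to repair this with the ``stealing'' trick described in Section~\ref{sec:lowerbound}: for each orphan index $j\in I$, reassign to $c_j$ exactly one replica of whichever of $p_-(c_j),p_+(c_j)$ is closer to $c_j$. To keep this tampering from cascading, I would inflate the multiplicity of every point by a large constant factor $M$, so that reassigning one replica shifts the affected center by at most $\alpha r_i/M$. By the second part of Observation~\ref{obs:instanceprop}, the second-nearest-center margin is at least $\beta r_i$, so as long as $M>2\alpha/\beta$ every point's nearest center in the DBA iterate matches its nearest center in the $k$-means iterate, and the linearized sequences still have length $\Theta(k)$.

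The technical heart would be an invocation of Theorem~\ref{thm:assignopt}, which certifies that the repaired assignment $\dbaassign$ is actually the DBA-optimal valid assignment against the current center sequence. This reduces to showing that the per-center penalty of refusing to reassign $p_\pm(c_j)$---at least $M\min(\Delta_+(j),\Delta_-(j))\geq M(\beta-2\alpha/M)^2 r_i^2$---dominates the cost $\norm{c_j-p(c_j)}^2\leq 4\alpha^2 r_i^2$ that $\dbaassign$ itself pays. The inequality $M(\beta-2\alpha/M)^2>4\alpha^2$ holds precisely when $M>4\alpha/\beta+4\alpha^2/\beta^2$, and this will be the single nontrivial inequality to verify. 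Once $\dbaassign$ is optimal and structurally identical to the $k$-means assignment, induction on the iteration index forces the DBA and $k$-means center sequences to evolve in lockstep, so DBA terminates exactly when $k$-means does, yielding $2^{\Omega(k)}$ iterations.

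A remaining technicality is that $\curveone$ and $\curvetwo$ must have the same length. I would balance them by prepending an isolated auxiliary gadget placed arbitrarily far from the other gadgets, containing one point of the longer sequence, the deficit many points of the shorter sequence, and its own initial center at their mean. Because this gadget is far-separated, its center never moves and it never interacts with the rest of the run. Since $M$ is a constant independent of $k$, both sequences end up with length $m=\Theta(k)$, completing the lower bound.
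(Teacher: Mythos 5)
Your proposal is correct and follows essentially the same route as the paper: linearizing Vattani's gadgets into the two sequences $\curveone,\curvetwo$, scaling multiplicities by a constant $M$, repairing orphaned centers via the stealing trick, certifying optimality through the inequality $M(\beta-2\alpha/M)^2>4\alpha^2$ for $M>4\alpha/\beta+4\alpha^2/\beta^2$, and balancing lengths with a far-away auxiliary gadget. No substantive differences from the paper's argument.
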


\section{Experiments on the M5 data set}\label{sec:experiments}

In this section, we present our empirical observations on the number of iterations of DBA on real world data. We run our implementation of the algorithm on real world data sets of time series and observe that in practice it runs much faster than the theoretical guarantees ensure.

\subsection{Research questions}

We are interested in how the number of iterations of the DBA algorithm depend on the complexity of the input and output for practical data sets. More specifically, we ask the following research questions.
\begin{itemize}
    \item What is the dependency of the number of iterations on the number $n$ of input point sequences?
    \item What is the dependency of the number of iterations on the length $m$ of the input point sequences?
    \item What is the dependency of the number of iterations on the length $k$ of the output center?
\end{itemize}

\subsection{Data set(s)}
To answer our research questions, we apply the DBA algorithm on the data set from the M5 Competition \cite{Makridakis2022}. Preliminary experiments on data sets from the UCR Time Series Classification Archive \cite{UCRArchive} have not shown a clear dependency on any of the quantities in question. We conjecture that that this result can be attributed to the heterogeneity of the
data sets and the relatively short length of the studied input point sequences. For a more detailed analysis of this preliminary experiments see Appendix~\ref{sec:ucr}.

The data set of the M5 Competition consists of the unit sales of products from 10 different Walmart stores in the USA. The sold number of units was tracked daily over 1942 days from 2011-01-29 to 2016-06-19 for each product aggregated for each store separately. The products can be divided into the $7$ product departments Hobbies 1-2, Foods 1-3 and Household 1-2. To create input sequences for DBA, we take for each product the time series that consists of the summed up daily sales of this product over all 10 stores. The number of input sequences in each department is given in Table~\ref{tab:departm}. In our experiments, we run DBA on sets of subsequences of such input sequences.

\begin{table}[h]
    \centering 
    \begin{tabular}{r|r|r|r|r|r|r}
         \hline 
         Hobbies 1 & Hobbies 2 & Foods 1 & Foods 2 & Foods 3 & Household 1 & Household 2 \\ 
         \hline
         416 & 149 & 216 & 398 & 823 & 532 & 515 \\
         \hline
    \end{tabular}
    \caption{Number of input sequences (different products) per department.} 
    \label{tab:departm}
\end{table}

\subsection{Setup of the experiments}
To perform experiments, DBA was implemented in C++. The implementation is based on the python implementation of Fran{\c c}ois Petitjean \cite{Petitjean2014git, petitjean2011} and can be found at \cite{git2022}. In the following, we describe the setup of the experiments.
All of our experiments use the same random initialization method to select the first center point sequence.

\subsubsection{Random initialization}
\label{sec:randin}
We initialize the first center point sequence from a random valid assignment of the input sequences. The valid assignment is created by a combination of random walks consisting of one walk per input sequence. Such a random walk starts with assigning the first point of the input series to the first point of the center point sequence and then chooses the next assignment by either moving forward on both input and center sequence or only moving forward on one of them. Each option has probability $1/3$. If it reaches the end of one sequence it continues to only move forward on the other one. After the random assignment is computed, the center points are chosen as the arithmetic means of the assigned points from the input series.

\subsubsection{Experiment 1: Dependency on the number $n$ of input point sequences}

We run DBA on sets of subsequences of consecutive days with varying sizes. The specific sizes of the sets are taken exponentially growing as 25, 50, 100, 200, 400, 800, 1600 and 3049. To choose the time series for a specific set of size $s$, we draw a number $r$ between $1$ and $3049-s$ uniformly at random and take the time series at the positions $r,r+1,\dots,r+s$.
We perform the experiment three times for three fixed values $100$, $300$ and $500$ for the length of the subsequences.
In each run of the experiment, we run DBA ten times for each fixed size of input sets and for input subseries of the fixed length, where we draw the starting day of the input subseries uniformly at random (same starting day for all series in the same run). For each run of DBA, we track the number of iterations.

\subsubsection{Experiment 2: Dependency on the length $m$ of the input point sequences}

In this experiment, we run DBA on each product department separately. This approach leads to a very natural division of the data set that creates data sets of different sizes. To be able to analyse the dependency of the number of iterations on the length of the time series, we run DBA on subsequences of input sequences with different length. We explicitly choose for each input sequence, one subsequences of consecutive days. The values for the length of the subsequences are taken exponentially growing as $15\cdot 2^i$ for $0\leq i\leq 7$ (15, 30, 60, 120, 240, 480, 960, 1920).
We run DBA ten times for each fixed value of the length, where we draw the starting day of the input subseries uniformly at random (same starting day for all series in the same run).  For each run of DBA, we track the number of iterations.

\subsubsection{Experiment 3: Dependency on the length $k$ of the output center}
 For the same reasons as in Experiment~2, we run the experiment on different Product departments separately. For sake of a clearer presentation, we restrict ourselfs to the product departments Foods 1, Foods 2 and Foods 3. As input sequences, we take all time series that correspond the respective product department. For each department, we run DBA for several length of the output center, where we test each value of the length ten times. The values for the length of the center are taken again as $15\cdot 2^i$ for $0\leq i\leq 7$ (15, 30, 60, 120, 240, 480, 960, 1920).  For each run of DBA, we track the number of iterations.

\section{Results of the experiments}
In this section, we state the results of the described experiments. More detailed tables and box plots of the exact results can be found in Appendix~\ref{app:m5}.
The results of Experiment~1, 2 and 3 are depicted in Figure~\ref{fig:m5loglognum}, \ref{fig:m5loglog} and \ref{fig:m5loglogkm}. For Experiment 1, the relation between the average number of iterations and the number of the input point sequences is shown in Figure~\ref{fig:m5loglognum}. For Experiment 2, the relation between the average number of iterations and the length of the input point sequences is shown in Figure~\ref{fig:m5loglog} and for Experiment 3, the relation between the average number of iterations and the length of the output center is shown in Figure~\ref{fig:m5loglogkm}.
\begin{figure}[H]
\centering
    \includegraphics[width=\textwidth]{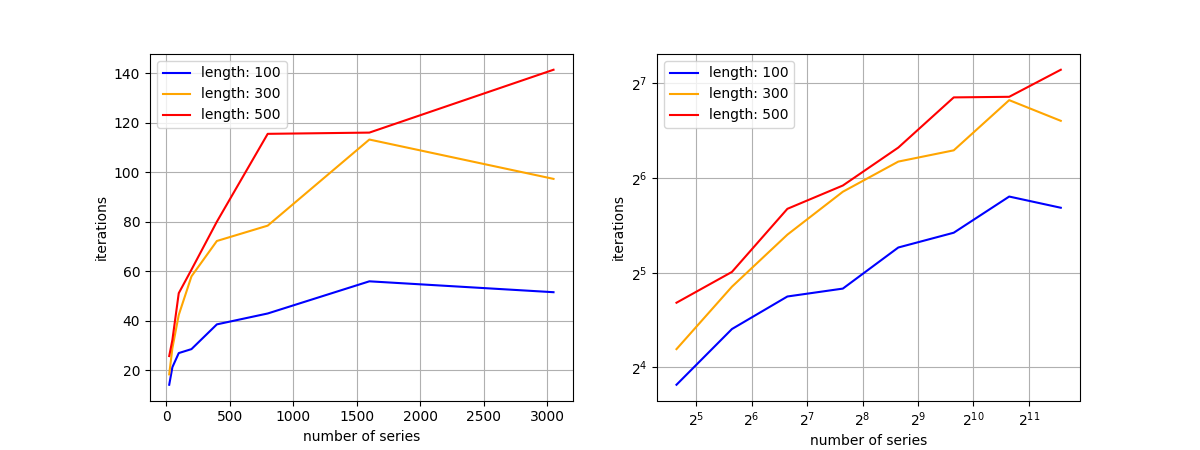}
    \caption{Depiction of the average number of iterations of the DBA algorithm with respect to the number of time series in the chosen sets. Each function graph corresponds to a fixed length of all time series in the sets. The left graphic uses normal scales and the right graphic uses logarithmic scales on both axes.}
    \label{fig:m5loglognum}
\end{figure}
\begin{figure}[H]
\centering
    \includegraphics[width=\textwidth]{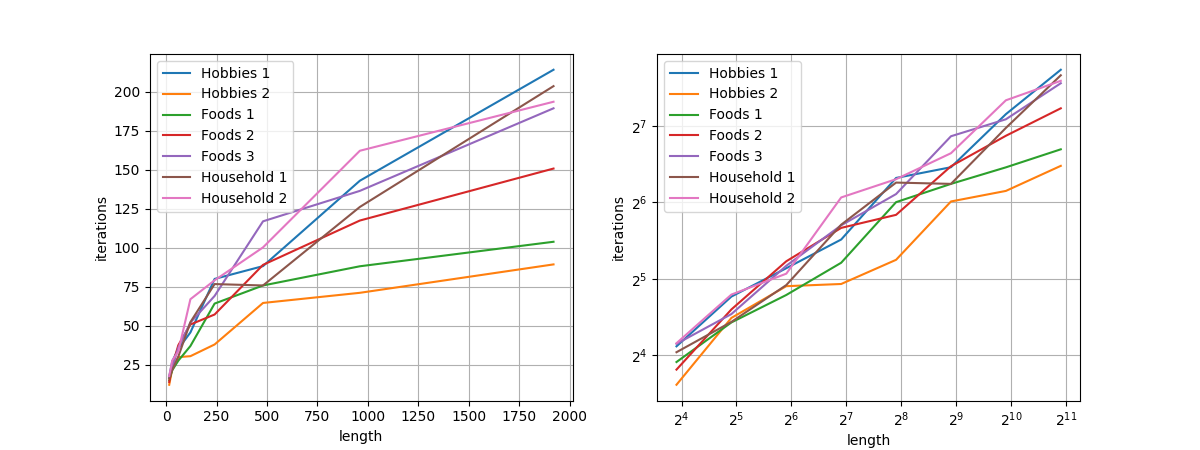}
    \caption{Depiction of the average number of iterations of the DBA algorithm with respect to the length of the series in the chosen sets. Each function graph corresponds to one product department. The left graphic uses normal scales and the right graphic uses logarithmic scales on both axes.}
    \label{fig:m5loglog}
\end{figure}
\begin{figure}[H]
\centering
    \includegraphics[width=\textwidth]{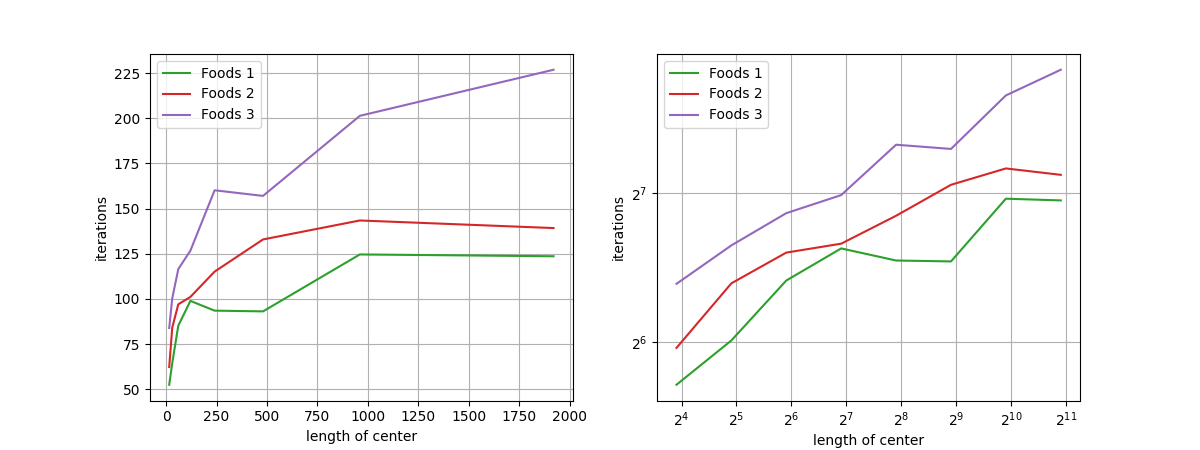}
    \caption{Depiction of the average number of iterations of the DBA algorithm with respect to the length of the center point sequence. Each function graph corresponds to one product department. The left graphic uses normal scales and the right graphic uses logarithmic scales on both axes.}
    \label{fig:m5loglogkm}
\end{figure}
As we can see in all three figures, the dependency on each parameter seems to be sublinear. Since the function graphs appear linear in the log-log plots, the dependencies seem to be polynomial for exponents smaller $1$. Estimations for the exponents which were computed with linear regression on the points in the log-log-plots are given in Table~\ref{tab:exp_size}, \ref{tab:exp_len} and \ref{tab:exp_len_cen}.

\begin{table}[H]
    \centering 
    \begin{tabular}{l||r|r|r}
         \hline 
         \hline
         Length &100 & 200 & 300 \\ 
         \hline
         Exponent &  0.27 & 0.35 & 0.36 \\
         \hline
         \hline
    \end{tabular}
    \caption{Experiment 1: Estimations for the exponents of the dependency on the size of the input point sequences for each fixed length of the input point sequences.} 
    \label{tab:exp_size}
\end{table}

\begin{table}[H]
    \centering 
    \begin{tabular}{l||r|r|r|r|r|r|r}
         \hline 
         \hline
          Dep. &Hobbies 1 & Hobbies 2 & Foods 1 & Foods 2 & Foods 3 & Household 1 & Household 2 \\ 
         \hline
         Exp. & 0.50 & 0.38 & 0.42 & 0.47 & 0.50 & 0.51 & 0.50 \\
         \hline
         \hline
    \end{tabular}
    \caption{Experiment 2: Estimations for the exponents of the dependency on the length of the output center for each Department.} 
    \label{tab:exp_len}
\end{table}

\begin{table}[H]
    \centering 
    \begin{tabular}{l||r|r|r}
         \hline 
         \hline
         Department &Foods 1 & Foods 2 & Foods 3 \\ 
         \hline
         Exponent &  0.16 & 0.16 & 0.20 \\
         \hline
         \hline
    \end{tabular}
    \caption{Experiment 3: Estimations for the exponents of the dependency on the length of the input point sequences for each Department.} 
    \label{tab:exp_len_cen}
\end{table}

\section{Conclusions}

Our experimental results show a gap between the theoretical bounds we were able to prove for semi-random input models and the number of steps DBA takes on real-world data. This suggests two directions for further research: (1) developing refined worst-case bounds under realistic input models, and (2) sharpening the obtained smoothed analysis estimates to better reflect practical performance.

The structure of the DBA heuristic is very similar to the classical $k$-means algorithm. One would naturally expect the techniques from the analysis of the $k$-means algorithm to be useful for the analysis the DBA heuristic.  This was true for the geometric ideas of Vattani in the case of proving lower bounds in Section \ref{sec:lowerbound}.  
However, surprisingly, the techniques for smoothed analysis of $k$-means were not effective in the smoothed analysis of DBA. 
The main difference seems to be following: In every step of the $k$-means algorithm, i.e. every change of assignment for a point from one center to another, the distance decreases. This gives $k$-means a very monotonic behaviour. In DBA we use dynamic time warping distance (DTW) and the behaviour is not necessarily monotonic: A change in the assignment of a point $x$ in point sequence $\gamma_1$ can \textit{increase} the distance of the point $x$ while the total DTW distance between point sequence $\gamma_1$ and the mean point sequence decreases. This behaviour seems to rule out usage of most useful techniques from \cite{roeglin2011}.
Instead, we manage to exploit the specific properties of DBA, as it requires the assignments between the points of an input sequence and the mean point sequence to respect the ordering along the mean point sequence and along the input sequence. 

In Section~\ref{sec:upperbounds}, we use  bounds on the complexity of arrangements of algebraic curves.

A similar technique was used earlier 
by Buchin et al.~\cite{buchin2022} to derive an exact exponential-time algorithm for the DTW-MEAN problem.

\bibliography{bib_short}

\begin{thebibliography}{10}

\bibitem{roeglin2011}
D.~Arthur, B.~Manthey, and H.~R\"{o}glin.
\newblock Smoothed analysis of the k-means method.
\newblock {\em J. ACM}, 58(5), 2011.

\bibitem{arthur2006slow}
D.~Arthur and S.~Vassilvitskii.
\newblock How slow is the k-means method?
\newblock In {\em Proc. Symp. on Computational Geometry}, SCG '06, page 144–153. ACM, 2006.

\bibitem{basu}
S.~Barone and S.~Basu.
\newblock On a real analog of {B}ezout inequality and the number of connected components of sign conditions.
\newblock {\em Proc. London Mathematical Society}, 112(1):115--145, 2016.

\bibitem{berndt1994using}
D.~J. Berndt and J.~Clifford.
\newblock Using dynamic time warping to find patterns in time series.
\newblock In {\em KDD workshop}, volume~10, pages 359--370, 1994.

\bibitem{brill2019}
M.~Brill, T.~Fluschnik, V.~Froese, B.~Jain, R.~Niedermeier, and D.~Schultz.
\newblock Exact mean computation in dynamic time warping spaces.
\newblock {\em Data Mining and Knowledge Discovery}, 33(1):252--291, 2019.

\bibitem{git2022}
F.~Brüning.
\newblock {DBA} implementation.
\newblock {\em GitHub}, 2022.
\newblock \url{https://github.com/FrederikBruening/DBA}.

\bibitem{buchin2020hardness}
K.~Buchin, A.~Driemel, and M.~Struijs.
\newblock On the hardness of computing an average curve.
\newblock In {\em 17th Scandinavian Symp. and Workshops on Algorithm Theory, {SWAT} 2020}, volume 162, pages 19:1--19:19, 2020.

\bibitem{buchin2022}
M.~Buchin, A.~Driemel, K.~van Greevenbroek, I.~Psarros, and D.~Rohde.
\newblock Approximating length-restricted means under dynamic time warping.
\newblock In {\em Approximation and Online Algorithms: 20th WAOA}, 2022.

\bibitem{bulteau2020tight}
L.~Bulteau, V.~Froese, and R.~Niedermeier.
\newblock Tight hardness results for consensus problems on circular strings and time series.
\newblock {\em SIAM Journal on Discrete Mathematics}, 34(3):1854--1883, 2020.

\bibitem{UCRArchive}
Y.~Chen, E.~Keogh, B.~Hu, N.~Begum, A.~Bagnall, A.~Mueen, and G.~Batista.
\newblock The {UCR} time series classification archive, July 2015.
\newblock \url{www.cs.ucr.edu/~eamonn/time_series_data/}.

\bibitem{ergurPV}
F.~Cucker, A.~Erg{\"u}r, and J.~Tonelli-Cueto.
\newblock Plantinga-{V}egter algorithm takes average polynomial time.
\newblock In {\em Proc. of the 2019 on international symposium on symbolic and algebraic computation}, pages 114--121, 2019.

\bibitem{DeLuca2012}
A.~De~Luca, A.~Hang, F.~Brudy, C.~Lindner, and H.~Hussmann.
\newblock Touch me once and i know it's you! implicit authentication based on touch screen patterns.
\newblock {\em Conference on Human Factors in Computing Systems - Proceedings}, 05 2012.

\bibitem{ergursmoothed}
A.~Erg{\"u}r, G.~Paouris, and J.~Rojas.
\newblock Smoothed analysis for the condition number of structured real polynomial systems.
\newblock {\em Mathematics of Computation}, 2021.

\bibitem{fawaz2018data}
H.~I. Fawaz, G.~Forestier, J.~Weber, L.~Idoumghar, and P.~Muller.
\newblock Data augmentation using synthetic data for time series classification with deep residual networks.
\newblock {\em arXiv:1808.02455}, 2018.

\bibitem{forestier2012classification}
G.~Forestier, F.~Lalys, L.~Riffaud, B.~Trelhu, and P.~Jannin.
\newblock Classification of surgical processes using dynamic time warping.
\newblock {\em Journal of biomedical informatics}, 45(2):255--264, 2012.

\bibitem{forestier2017generating}
G.~Forestier, F.~Petitjean, H.~A. Dau, G.~I. Webb, and E.~Keogh.
\newblock Generating synthetic time series to augment sparse datasets.
\newblock In {\em 2017 IEEE international conference on data mining (ICDM)}, pages 865--870, 2017.

\bibitem{Gavrila95}
D.~M. Gavrila and L.~S. Davis.
\newblock Towards 3-d model-based tracking and recognition of human movement: a multi-view approach.
\newblock In {\em International Workshop on Automatic Face- and Gesture-Recognition. IEEE}, pages 272--277, 1995.

\bibitem{harpeled2005}
S.~Har-Peled and B.~Sadri.
\newblock How fast is the k-means method?
\newblock {\em Algorithmica}, 41(3):185--202, 2005.

\bibitem{inaba2000variance}
M.~Inaba, N.~Katoh, and H.~Imai.
\newblock Variance-based k-clustering algorithms by voronoi diagrams and randomization.
\newblock {\em IEICE Trans. on Information and Systems}, 83(6):1199--1206, 2000.

\bibitem{keogh2005exact}
E.~Keogh and C.~A. Ratanamahatana.
\newblock Exact indexing of dynamic time warping.
\newblock {\em Knowledge and information systems}, 7(3):358--386, 2005.

\bibitem{lines2015time}
J.~Lines and A.~Bagnall.
\newblock Time series classification with ensembles of elastic distance measures.
\newblock {\em Data Mining and Knowledge Discovery}, 29(3):565--592, 2015.

\bibitem{lloyd1982least}
S.~Lloyd.
\newblock Least {S}quares {Q}uantization in {PCM}.
\newblock {\em IEEE Trans. on information theory}, 28(2):129--137, 1982.

\bibitem{Makridakis2022}
S.~Makridakis, E.~Spiliotis, and V.~Assimakopoulos.
\newblock The {M5} competition: Background, organization, and implementation.
\newblock {\em International Journal of Forecasting}, 38(4):1325--1336, 2022.
\newblock Special Issue: {M5} competition.

\bibitem{Petitjean2014git}
F.~Petitjean.
\newblock {DBA}.
\newblock {\em GitHub repository}, 2014.
\newblock \url{https://github.com/fpetitjean/DBA.git}.

\bibitem{petitjean2016faster}
F.~Petitjean, G.~Forestier, G.~I. Webb, A.~E. Nicholson, Y.~Chen, and E.~Keogh.
\newblock Faster and more accurate classification of time series by exploiting a novel dynamic time warping averaging algorithm.
\newblock {\em Knowledge and Information Systems}, 47(1):1--26, 2016.

\bibitem{petitjean2011}
F.~Petitjean, A.~Ketterlin, and P.~Gan{\c c}arski.
\newblock A global averaging method for dynamic time warping, with applications to clustering.
\newblock {\em Pattern Recognition}, 44(3):678--693, 2011.

\bibitem{Sakoe78}
H.~Sakoe and S.~Chiba.
\newblock Dynamic programming algorithm optimization for spoken word recognition.
\newblock {\em IEEE Trans. on Acoustics, Speech, and Signal Processing}, 26:43--49, 1978.

\bibitem{schultz2018nonsmooth}
D.~Schultz and B.~Jain.
\newblock Nonsmooth analysis and subgradient methods for averaging in dynamic time warping spaces.
\newblock {\em Pattern Recognition}, 74:340--358, 2018.

\bibitem{seto2015multivariate}
S.~Seto, W.~Zhang, and Y.~Zhou.
\newblock Multivariate time series classification using dynamic time warping template selection for human activity recognition.
\newblock In {\em IEEE Symp. on comp. intelligence}, pages 1399--1406, 2015.

\bibitem{Shanker2007}
A.~Shanker and A.N. Rajagopalan.
\newblock Off-line signature verification using {DTW}.
\newblock {\em Pattern Recognition Letters}, 28:1407--1414, 09 2007.

\bibitem{teeraratkul2017shape}
T.~Teeraratkul, D.~O’Neill, and S.~Lall.
\newblock Shape-based approach to household electric load curve clustering and prediction.
\newblock {\em IEEE Trans. on Smart Grid}, 9(5):5196--5206, 2017.

\bibitem{teichgraeber2019clustering}
H.~Teichgraeber and A.~R. Brandt.
\newblock Clustering methods to find representative periods for the optimization of energy systems: An initial framework and comparison.
\newblock {\em Applied energy}, 239:1283--1293, 2019.

\bibitem{vattani2011}
A.~Vattani.
\newblock k-means requires exponentially many iterations even in the plane.
\newblock {\em Discrete \& Computational Geometry}, 45(4):596--616, 2011.

\end{thebibliography}

\appendix

\section{Experiments on the UCR Time Series Classification Archive}\label{sec:ucr}
The UCR Time Series Classification Archive contains many data sets from different areas. For our experiments, we selected the same subset of data sets as Schultz and Jain~\cite{schultz2018nonsmooth}. For each data set, we merge the training and test sets into a single data set. Each data set is divided into several classes and only contains time series of identical length. In Table~\ref{tab:ucr:chrct}, we display the name, the number $c$ of classes, the total number $n$ of time series, the average number $n/c$ of time series per cluster and the length $m$ of each data set. 

\begin{table}[H]
    \centering 
    \begin{tabular}{l||r|r|r|r||r|r||r|r}
         \hline 
         \hline 
          Dataset &  $c$ & $n$ & $n/c$ & $m$ &   $\mu_1$ & $\sigma_1^2$ &   $\mu_2$ & $\sigma_2^2$\\
         \hline
         50words & 50 & 905 & 18.10 & 270 & 50.66 & 49.56 & 45.41 & 45.86\\
         Adiac & 37 & 781 & 21.11 & 176 & 3.24 & 3.93 & 38.05 & 9.45\\
         Beef & 5 & 60 & 12.00 & 470 & 88.60 & 61.83 & 85.6 & 53.02\\
         CBF & 3 & 930 & 310.00 & 128 & 98.00 & 39.30 & 67.77 & 18.93\\
         ChlorineConcentration & 3 & 4307 & 1435.67 & 166  & 78.67 & 10.34 & 96.67 & 38.42\\
         Coffee & 2 & 56 & 28.00 & 286 & 37.50 & 8.50 & 40.15 & 6.37\\
         ECG200 & 2 & 200 & 100.00 & 96 & 65.50 & 12.50 & 53.40 & 9.53\\
         ECG5000 & 5 & 5000 & 1000.00 & 140 & 110.40 & 80.50 & 136.50 & 144.87\\
         ElectricDevices  & 7 & 16637 & 2376.71 & 96 & 50.14 & 26.44 & 57.96 & 47.15\\
         FaceAll & 14 & 2250 & 160.71 & 131 & 40.43 & 17.88 & 67.29 & 28.81\\
         FaceFour & 4 & 112 & 28.00 & 350 & 29.50 & 9.96 & 33.45 & 7.89\\
         Fish & 7 & 350 & 50.00 & 463 & 93.57 & 33.67 & 77.84 & 23.20\\
         Gun\_Point & 2 & 200 & 100.00 & 150 & 46.00 & 4.00 & 62.75 & 22.29\\
         Lighting2  & 2 & 121 & 60.50 & 637 & 80.00 & 8.00 & 47.85 & 10.36\\
         Lighting7 & 7 & 143 & 20.43 & 319 & 30.29 & 24.52 & 23.64 & 8.12\\
         OliveOil & 4 & 60 & 15.00 & 570 & 34.75 & 21.09 & 49.90 &  20.10\\
         OSULeaf & 6 & 441 & 73.50 & 427 & 164.00 & 54.99 & 126.63 & 120.17\\
         PhalangesOutlineCorrect & 2 & 2658 & 1329.00 & 80 & 57.50 & 3.50 & 192.00 & 74.70\\
         SwedishLeaf & 15 & 1125 & 75.00 & 128 & 28.34 & 11.01 & 55.82 & 15.36 \\
         synthetic\_control & 6 & 600 & 100.00 & 60 & 32.00 & 8.52 & 31.05 & 10.45 \\
         Trace & 4 & 200 & 50.00 & 275 & 29.00 & 4.06 & 54.55 & 25.66\\
         Two\_Patterns & 4 & 5000 & 1250.00 & 128 & 104.25 & 23.59 & 81.53 & 28.97 \\
         wafer & 2 & 7164 & 3582.00 & 152 & 67.5 & 4.50 & 52.05 & 7.63 \\
         yoga & 2 & 3300 & 1650.00 & 426 & 672.5 & 197.5  & 563.95 & 292.07\\
         \hline
         \hline 
    \end{tabular}
    \caption{Characteristics of the UCR TS datasets and the results for applying DBA. Here, $c$ stands for the number of classes, $n$ for the number of series, $n/c$ is therefore the average number of series in a class and $m$ is the length of each time series. The value $\mu_i$ stands for the average number of iterations and $\sigma_i^2$ stands for the variance in the number of iterations over all classes. Here the index $i=1$ stands for medoid initialization and  $i=2$ stands for random initialization.} 
    \label{tab:ucr:chrct}
\end{table}

We run the DBA algorithm on the chosen data sets for two different initialization methods (medoid and random) and analyse the dependency of the number of iterations on the number and length of the series. As input, we take each class of the data set separately and track the number of iterations needed by the DBA algorithm to converge. In the end, we compute the average number of iterations over all classes and the corresponding variances.

For the medoid initialization, we take the input point sequence as the starting center that minimizes the DTW distance to the other input point sequences. 
For the random initialization, we construct the starting center from a valid assignment of the input sequences. The valid assignment is created by a combination of random walks consisting of one walk per input sequence. Such a random walk starts with assigning the first point of the input series to the first point of the center point sequence and then chooses the next assignment by either going to the next point on the input series while staying at the same point of the center point sequence, going to then next point on the center point sequence while staying at the same point of the input series or moving to the next point on both series. Each option is taken with probability $1/3$. It can also happen, that there is only one valid option remaining, which is then always chosen. After the random assignment is computed, the center points are chosen as the arithmetic means of the assigned points from the input series. In the case of the random initialization, we run the DBA algorithm 10 times per cluster and calculate the average number of iterations over all runs on all clusters.

For each data set, we depict the tracked average number of iterations of the DBA algorithm  and the corresponding variances in Table~\ref{tab:ucr:chrct} and Figure~\ref{fig:med:num}. The figures do not suggest any clear dependency of the number of iterations on the number or length of the input point sequences for any of the two initialization methods. We conjecture that that this result can be attributed to the heterogeneity of the data sets.

\begin{figure}[H]
    \centering    \includegraphics[width=\textwidth]{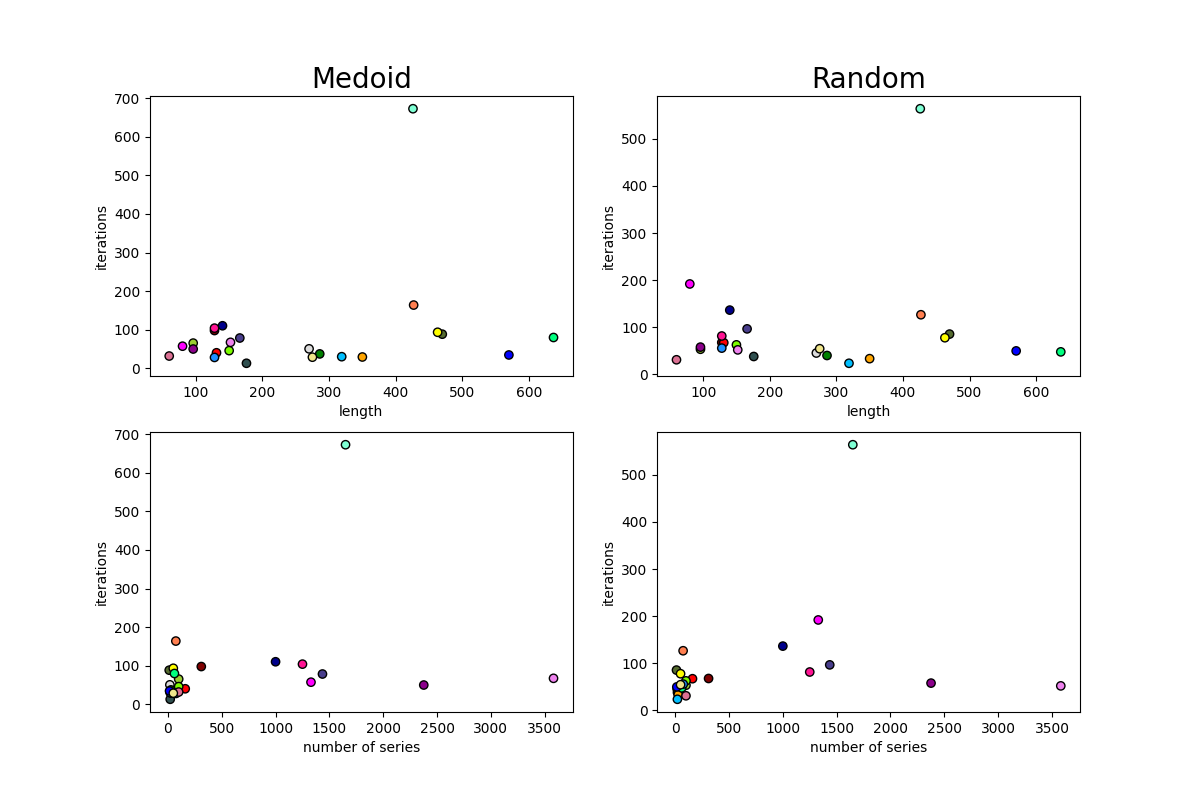}
    \caption{Depiction of the average number of iterations of the DBA algorithm with respect to the length and number of the series in the corresponding data sets. Each color corresponds to one data set of the UCR Time Series Clasification Archive. The graphics on the left side show the results for the medoid initialization and the ones on the right side show the results for the random initialization.}
    \label{fig:med:num}
\end{figure}

\section{Data of the experiments on the M5 data set}
\label{app:m5}
This chapter includes the detailed data of the experiments on the M5 data set. 
The data of Experiment~1 about the Dependency on the number of input point sequences is depicted in Table~\ref{tab:m5:num_seq}, Table~\ref{tab:m5:num_seq:var} and Figure~\ref{fig:m5var}, the data of Experiment~2 about the Dependency on the length of the input point sequences is depicted in Table~\ref{tab:m5:len}, Table~\ref{tab:m5:len:var} and Figure~\ref{fig:m5var} and the data of Experiment~3 about the Dependency on the length of the output center is depicted in Table~\ref{tab:m5:km}, Table~\ref{tab:m5:km_var} and Figure~\ref{fig:km:box}.

\begin{table}[ht]
    \centering 
    \begin{tabular}{l||r|r|r|r|r|r|r|r}
         \hline 
         \hline 
          \multicolumn{9}{c}{Mean of the number of iterations (Experiment 1)}\\
         \hline 
         \hline 
         \multicolumn{1}{r||}{\quad\quad Sequences} &  25 & 50 & 100 & 200 & 400 & 800 & 1600 & 3149\\
          Length & & & & & & & & \\
          \hline
          \hline 
         100 & 14.1& 21.2& 26.9& 28.5& 38.5& 42.9& 55.9& 51.5 \\
         300 & 18.3& 28.9& 42.3& 57.9& 72.2& 78.4& 113.2& 97.3 \\
         500 & 25.7& 32.2& 51.1& 60.6& 80& 115.5& 116& 141.4\\
         \hline 
         \hline 
    \end{tabular}
    \caption{Mean of the number of iterations with respect to the number of input point sequences. Each row corresponds to one fixed length of the input sequences}    
    \label{tab:m5:num_seq}
\end{table}

\begin{table}[ht]
    \centering 
    \begin{tabular}{l||r|r|r|r|r|r|r|r}
         \hline 
         \hline 
          \multicolumn{9}{c}{Variance of the number of iterations (Experiment 1)}\\
         \hline 
         \hline 
         \multicolumn{1}{r||}{\quad\quad Sequences} &  25 & 50 & 100 & 200 & 400 & 800 & 1600 & 3149\\
          Length & & & & & & & & \\
          \hline
          \hline 
         100 & 2.88& 4.73& 7.53& 5.41& 8.67& 11.94& 22.87& 17.25 \\
         300 & 5.64& 6.99& 8.82& 14.02& 25.72& 26.75& 57.45& 29.33 \\
         500 & 8.94& 6.90& 7.62& 17.49& 21.67& 34.80& 34.35& 44.50\\
         \hline 
         \hline 
    \end{tabular}
    \caption{Variance of the number of iterations with respect to the number of input point sequences. Each row corresponds to one fixed length of the input sequences} 
    \label{tab:m5:num_seq:var}
\end{table}

\begin{figure}[ht]
\centering
    \includegraphics[width=\textwidth]{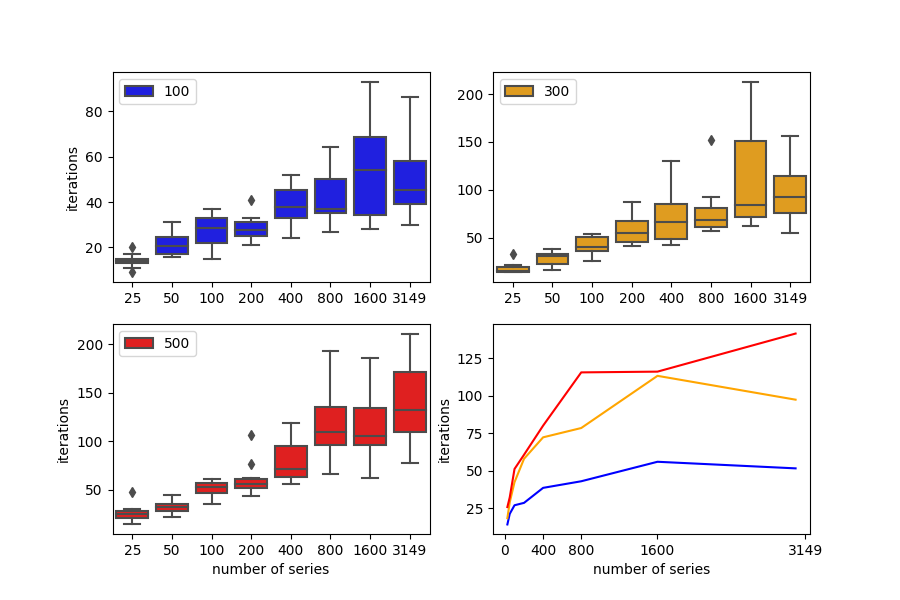}
    \caption{Experiment~1: Boxplots of the number of iterations for each department and each number of input sequences. Each graphic corresponds to one fixed length of the input sequences and has a logarithmic scale on the horizontal axis. Only the last  graphic has regular scale on the horizontal axis. It depicts the mean of the number of iterations per number of input sequences for each of the departments.} 
    \label{fig:num:box}
\end{figure}

\begin{table}[H]
    \centering 
    \begin{tabular}{l||r|r|r|r|r|r|r|r}
         \hline 
         \hline 
          \multicolumn{9}{c}{Mean of the number of iterations (Experiment 2)}\\
         \hline 
         \hline 
          \multicolumn{1}{r||}{\quad\quad\quad\quad Length} &  15 & 30 & 60 & 120 & 240 & 480 & 960 & 1920\\
           Department & & & & & & & & \\
          \hline
          \hline 
         Hobbies 1 & 17.3 &  27.2&  35.2&  45.7&  80&  88.2&  142.9&  214\\
         Hobbies 2 & 12.2 & 22.4& 29.9& 30.5& 38& 64.6& 71.1& 89.3 \\
         Foods 1 & 15& 21.5& 27.6& 37& 64.2& 75.8& 88.1& 103.8\\
         Foods 2 & 14& 24.2& 37.5& 50.8& 57.2& 89& 117.4& 150.7\\
         Foods 3 & 17.7& 23.2& 36& 52& 69.1& 116.9& 136.4& 189.3\\
         Household 1 & 16.4& 21.6& 30.2& 52.3& 76.7& 75.8& 126.1& 203.5\\
         Household 2 & 17.8& 27.7& 33.5& 67& 79.1& 100.2& 162.1& 193.5 \\
         \hline 
         \hline 
    \end{tabular}
    \caption{Mean of the number of iterations with respect to the length of the input point sequences. Each row corresponds to one product department.} 
    \label{tab:m5:len}
\end{table}

\begin{table}[p]
    \centering 
    \begin{tabular}{l||r|r|r|r|r|r|r|r}
         \hline 
         \hline 
          \multicolumn{9}{c}{Variance of the number of iterations (Experiment 2)}\\
         \hline 
         \hline 
          \multicolumn{1}{r||}{\quad\quad\quad\quad Length} &  15 & 30 & 60 & 120 & 240 & 480 & 960 & 1920\\
           Department & & & & & & & & \\
          \hline
          \hline 
         Hobbies 1 & 3.38& 11.12& 12.66& 15.47& 30.43& 20.36& 44.97& 70.87\\
         Hobbies 2 & 2.56& 9.31& 18.56& 8.09& 11.06& 12.86& 7.99& 28.43 \\
         Foods 1 & 4.47& 5.57& 10.34& 12.17& 15.07& 21.89& 10.19& 27.32\\
         Foods 2 & 3.16& 7.00& 11.81& 19.72& 14.61& 27.66& 20.19& 54.80\\
         Foods 3 & 6.42& 6.08& 11.33& 19.44& 37.63& 45.65& 41.44& 33.05\\
         Household 1 & 3.41& 3.53& 8.22& 28.78& 32.95& 19.1405& 47.66& 68.82\\
         Household 2 & 6.87& 12.1& 9.57& 30.94& 27.06& 21.50& 40.80& 51.16 \\
         \hline 
         \hline 
    \end{tabular}
    \caption{Variance of the number of iterations with respect to the length of the input point sequences. Each row corresponds to one product department} 
    \label{tab:m5:len:var}
\end{table}

\begin{figure}[p]
\centering
    \includegraphics[width=\textwidth]{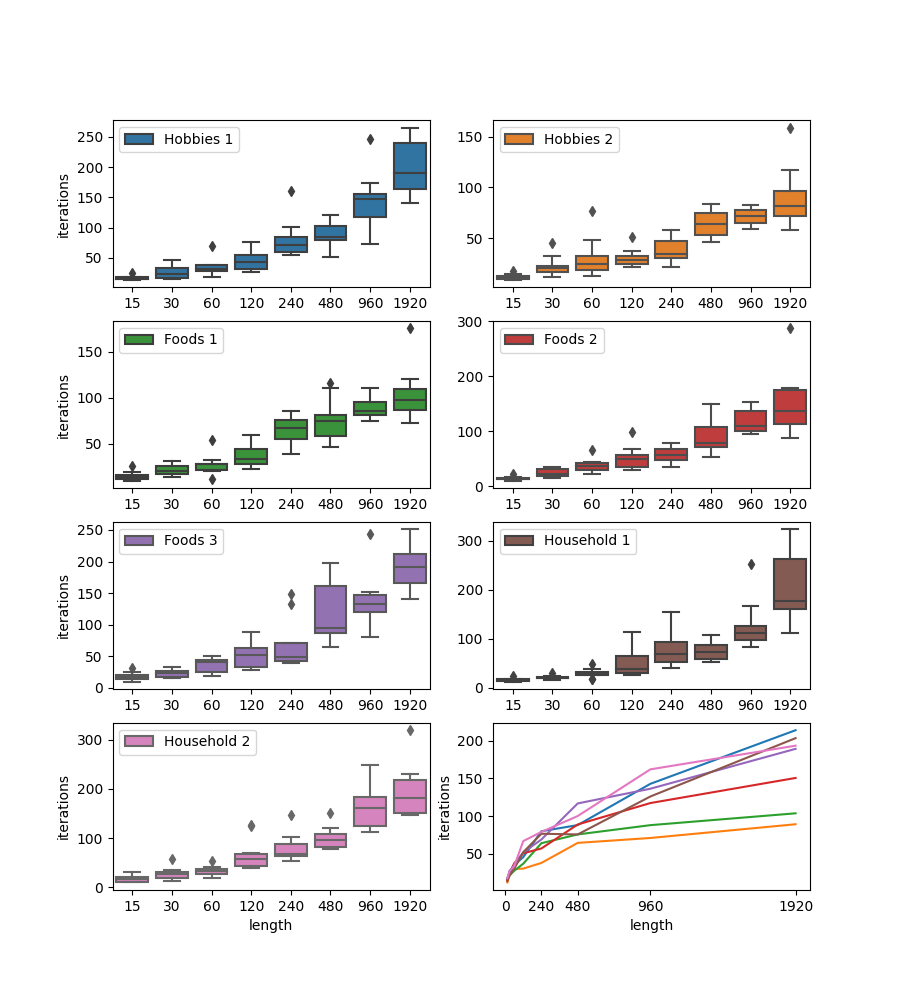}
    \caption{Experiment~2: Boxplots of the number of iterations for each department and each length of input sequences. Each graphic corresponds to one product department and has a logarithmic scale on the horizontal axis. Only the last graphic has regular scale on the horizontal axis. It depicts the mean of the number of iterations per length for each of the departments.}
    \label{fig:m5var}
\end{figure}

\begin{table}[ht]
    \centering 
    \begin{tabular}{l||r|r|r|r|r|r|r|r}
         \hline 
         \hline 
          \multicolumn{9}{c}{Mean of the number of iterations (Experiment 3)}\\
         \hline 
         \hline 
         \multicolumn{1}{r||}{\quad\quad Length of Center} &  15 & 30 & 60 & 120 & 240 & 480 & 960 & 1920\\
          Department & & & & & & & & \\
          \hline
          \hline 
         Foods 1 & 52.5& 64.5& 85.2& 98.9& 93.5& 93.1& 124.6& 123.6 \\
         Foods 2 & 62.3& 84.1& 97& 101.1& 115.1& 132.9& 143.4& 139.2 \\
         Foods 3 & 83.9& 100.3& 116.5& 126.7& 160.1& 157& 201.3& 226.8\\
         \hline 
         \hline 
    \end{tabular}
    \caption{Mean of the number of iterations with respect to the length of the output center. Each row corresponds to one product department.} 
    \label{tab:m5:km}
\end{table}

\begin{table}[ht]
    \centering 
    \begin{tabular}{l||r|r|r|r|r|r|r|r}
         \hline 
         \hline 
          \multicolumn{9}{c}{Variance of the number of iterations (Experiment 3)}\\
         \hline 
         \hline 
         \multicolumn{1}{r||}{\quad\quad Length of Center} &  15 & 30 & 60 & 120 & 240 & 480 & 960 & 1920\\
          Department & & & & & & & & \\
          \hline
          \hline 
         Foods 1 & 6.07& 11.60& 15.90& 17.47& 13.07& 14.01& 20.11& 36.69 \\
         Foods 2 & 11.1& 19.24& 17.78& 13.86& 28.68& 51.42& 40.39& 32.44 \\
         Foods 3 & 25.78& 13.33& 27.15& 25.08& 50.25& 36.90& 53.62& 66.71\\
         \hline 
         \hline 
    \end{tabular}
    \caption{Variance of the number of iterations with respect to the length of the output center. Each row corresponds to one product department} 
    \label{tab:m5:km_var}
\end{table}

\begin{figure}[ht]
\centering
    \includegraphics[width=\textwidth]{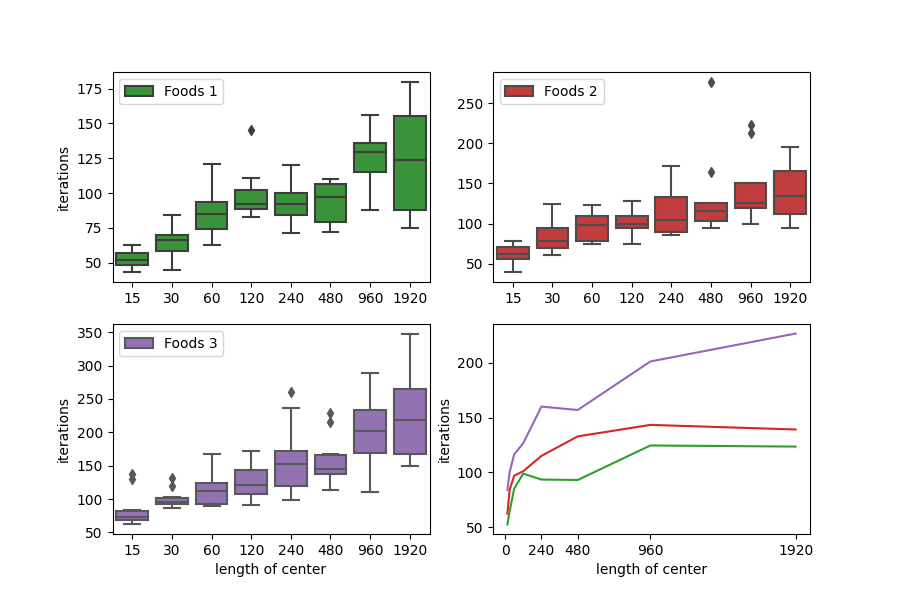}
    \caption{Experiment~3: Boxplots of the number of iterations for each department and each length of the output center. Each graphic corresponds to one product department and has a logarithmic scale on the horizontal axis. Only the last graphic has regular scale on the horizontal axis. It depicts the mean of the number of iterations per fixed length of the output center for each of the departments. }
    \label{fig:km:box}
\end{figure}

\end{document}